\documentclass[12pt, %a4 paper
]{article}
% [12pt,final,
% %spanish,
% thmsa,%titlepage,
% epsf,a4paper]%{article}
% {article}
 \linespread{1.3}

%Tablas lindas
\usepackage{booktabs}

\usepackage[utf8]{inputenc}
\usepackage{amssymb}
\usepackage{amsmath}

\usepackage{tikz}
\usepackage{geometry}
\geometry{tmargin=3 cm, lmargin=2.5 cm, rmargin=2.5 cm, bmargin=3cm}

%cell color

%\usepackage[table]{xcolor}

%---------------------------------------------

\newtheorem{theorem}{Theorem}%[chapter]

\newtheorem{fact}{Fact}
%[chapter]
%[chapter]
%[chapter]
%[chapter]
%[chapter]
%[chapter]
%[chapter]
\newtheorem{example}{Example}%[chapter]
%[chapter]
\newtheorem{lemma}{Lemma}%[chapter]
%[chapter]
%[chapter]
\newtheorem{proposition}{Proposition}%[chapter]
\newtheorem{remark}{Remark}%[chapter]
%[chapter]
%[chapter]
%\newenvironment{proof}[1]{Prueba: #1
%\ $\square$}

\newenvironment{proof}[1][Proof]{\emph{#1.} }{\  \hfill $\square $ \vspace{5 pt}}

\usepackage{natbib}
\usepackage{amssymb}
\usepackage{amsmath}

\usepackage{mathpazo}
\usepackage{enumerate}

%%%%%%%%%%%%%%%%%%%%

\usepackage{amsfonts}

\usepackage{amssymb}

\usepackage{enumerate}

\usepackage{mathrsfs}

\usepackage{cancel}

\usepackage{pgf,tikz}

\usepackage{graphics}
\usepackage{color}

\usetikzlibrary{decorations.pathreplacing,angles,quotes}
\usetikzlibrary{arrows.meta}
\usetikzlibrary{arrows, decorations.markings}
\tikzset{myptr/.style={decoration={markings,mark=at position 1 with %
       {\arrow[scale=2,>=stealth]{>}}},postaction={decorate}}}
%%%%%%%%%%%%%%%%%%%%

%\usepackage{color}   %May be necessary if you want to color links
\usepackage{hyperref}
\hypersetup{
    colorlinks,
    citecolor=blue,
    filecolor=black,
    linkcolor=blue,
    urlcolor=blue
}
%%%%%%%%%%%%%%%%%%%%%%

%\usepackage{authblk}

\usepackage{pgf,tikz}
\usetikzlibrary{arrows}

%\setcounter{MaxMatrixCols}{10}

%\usepackage[dvips]{graphicx}

%------------------------------------------------

%Esto cambia los headers a minúscula
\usepackage{fancyhdr}
%\renewcommand{\chaptermark}[1]{\markboth{#1}{}}

%---------------------------------------------
%Encabezado y pie de página

%\pagestyle{fancy}
%\fancyhf{}
%\fancyhead[LE,RO]{\thepage}
%\fancyhead[LO]{\normalfont\nouppercase{\rightmark}}
%\fancyhead[RE]{\normalfont\nouppercase{\leftmark}}
%---------------------------------------------
\usepackage{soul}

\usepackage{emptypage}
%\usepackage{lipsum}

%Cambié \itshape por \normalfont

%%%%%%%%%%%%%%%%%%%%%%%%%%

%%%ESTO ES PARA PONER LA MISMA AFILIACION A DISTINTOS AUTORES
\newcommand*\samethanks[1][\value{footnote}]{\footnotemark[#1]}

%Cursiva Hermosa
\usepackage[T1]{fontenc}
\DeclareFontFamily{T1}{calligra}{}
\DeclareFontShape{T1}{calligra}{m}{n}{<->s*[1.44]callig15}{}
\DeclareMathAlphabet\mathcalligra   {T1}{calligra} {m} {n}

%TABLAS
\usepackage{multirow}
\usepackage{array}
\usepackage{mathtools}
%dash vertical lines
\usepackage{arydshln}
%footnote in table
\usepackage{threeparttable}
\usepackage{booktabs,caption}

%%%%%%%%%%%%%%%%%%%%%%%%%%%%%%%%%%%%%%%%%%%%%%%%%

\usepackage{ifthen}
\newboolean{showcomments}
\setboolean{showcomments}{true}

\newcommand{\pablo}[1]{  \ifthenelse{\boolean{showcomments}}
{\textcolor{green!50!black}{(T: #1)}}{}}
\newcommand{\marcelo}[1]{\ifthenelse{\boolean{showcomments}}
{\textcolor{red}{(M: #1)}}{}}
\newcommand{\agustin}[1]{  \ifthenelse{\boolean{showcomments}}
{\textcolor{blue!50!black}{(T: #1)}}{}}

\usepackage{orcidlink}
%%%%%%%%%%%%%%%%%%%%%%%%%%%%%%%%%%%%%%%%%%%%%%%%%

%%Para usar \mathdutchcal:

%\DeclareMathAlphabet{\mathdutchcal}{U}{dutchcal}{m}{n}
%\SetMathAlphabet{\mathdutchcal}{bold}{U}{dutchcal}{b}{n}
%\DeclareMathAlphabet{\mathdutchbcal}{U}{dutchcal}{b}{n}

\begin{document}

\title{Non-obvious manipulability in division problems with general preferences%
\thanks{%
%Thanks to be added. 
We thank William Thomson for his detailed comments, as well as the participants of the Theory Seminar at the Department of Economics, University of Rochester, and the FOVI Seminar at the University of Chile.  %Alejandro Neme, James Schummer, Fernando Tohmé, participants of the 23rd annual SAET Conference, the Associate Editor and an anonymous referee for their helpful comments. 
We acknowledge the financial support
from UNSL through grants 032016, 030120, and 030320, from Consejo Nacional
de Investigaciones Cient\'{\i}ficas y T\'{e}cnicas (CONICET) through grant
PIP 112-200801-00655, and from Agencia Nacional de Promoción Cient\'ifica y Tecnológica through grant PICT 2017-2355.}}
%\subtitle{Do you have a subtitle?\\ If so, write it here}

%\titlerunning{Short form of title}        % if too long for running head

\author{R. Pablo Arribillaga\thanks{%{\scriptsize 
Instituto de Matem\'{a}tica Aplicada San Luis (UNSL and CONICET) and Departamento de Matemática, Universidad Nacional de San
Luis, San Luis, Argentina. Emails: \href{mailto:rarribi@unsl.edu.ar}{rarribi@unsl.edu.ar} %(A. G. Bonifacio) 
and \href{mailto:abonifacio@unsl.edu.ar}{abonifacio@unsl.edu.ar} %(P. Neme).
%}
} \hspace{1.5 pt} \orcidlink{0000-0002-0521-0301} \and Agustín G. Bonifacio\samethanks[2] 
\hspace{1.5 pt} \orcidlink{0000-0003-2239-8673}}

\date{\today}

\maketitle

\begin{abstract}

In problems involving the allocation of a single non-disposable commodity, we study rules defined on a general domain of preferences requiring only that each preference exhibit a unique global maximum. Our focus is on rules that satisfy a relaxed form of \emph{strategy-proofness}, known as \emph{non-obvious manipulability}. We show that the combination of \emph{efficiency} and \emph{non-obvious manipulability} leads to impossibility results, whereas  weakening \emph{efficiency} to \emph{unanimity} gives rise to a large family of well-behaved \emph{non-obviously manipulable} rules.

%In problems involving the allocation of a single non-disposable commodity, we study rules defined on a general domain of preferences that only requires each preference to have a unique peak. We focus in rules that satisfy a weakening of strategy-proofness, known as non-obvious manipulability (NOM). We show that efficiency combined with NOM yields impossibility results, whereas weakening efficiency to unanimity gives rise to a large family of well-behaved rules.

%In problems consisting of the allocation of a single non-disposable commodity, we study rules defined in a broadly general domain of preferences that satisfy a weakening of strategy-proofness called non-obvious manipulability (NOM). We show that efficiency combined with NOM leads to impossibility results but the weakening of efficiency to unanimity allows the appearance of a large family of well-behaved rules.\ldots 

\bigskip

\noindent \emph{JEL classification:} D70, D82. \bigskip

\noindent \emph{Keywords:} division problem, obvious manipulations, allotment rules, unanimity. 

\end{abstract}

%\newpage
\section{Introduction}

A division problem concerns the full allocation of a perfectly divisible commodity among a set of agents. A rule is defined as a systematic procedure that specifies, for every such problem, a feasible allocation. Since the seminal contribution of \cite{sprumont1991division}, the literature has been devoted to the identification of rules that simultaneously satisfy the three classical desiderata of mechanism design: efficiency, equity, and robustness to strategic manipulation. The bulk of this research has been conducted under the strong assumption of single-peaked preferences, whereby an agent’s welfare increases with consumption up to a critical level---the peak---and decreases beyond it \citep[see][]{thomson1994consistent,thomson1994resource,thomson1995population,barbera1997strategy}.\footnote{Other relevant domain of preferences that has been studied involves single-dipped preferences,  whereby an agent’s welfare decreases with consumption up to a critical level---the dip---and increases beyond it \citep[see][]{klaus1997strategy, gong2024mechanisms}.} 

%A division problem consists of fully allocating an amount of a perfectly divisible commodity among a group of agents. A rule is a systematic procedure that assigns to each such problem a (feasible) allocation. The literature on these problems, originated from , aims to search for well-behaved rules with respect to efficiency, fairness, and robustness against manipulations, the three perennial goals of mechanism design. 

%The vast majority of works on this problem have been made under the strong assumption that agents' preferences are single-peaked: up to some critical level, called the peak, an increase in an agent's consumption raises their welfare; beyond that level, the opposite holds \citep[see, among others,][]{thomson1994consistent,thomson1994resource,thomson1995population,barbera1997strategy}.

%From the robustness of manipulation perspective, the property of strategy-proofness---by which misreporting preferences is never better than truth-telling---has played a central role in the search for well-behaved rules. 

From the perspective of robustness to manipulation, the property of ``strategy-proof\-ness''---according to which misreporting preferences is never better than truth-telling---has played a central role in the search for well-behaved rules. Yet strategy-proofness is generally a demanding requirement. While opportunities for manipulation may be abundant, agents often fail to recognize or exploit them due to limited information about others’ behavior or  cognitive constraints. Assuming that an agent knows all possible outcomes under any preference misreport and under truth-telling,  \cite{troyan2020obvious} propose a simple and tractable way to recognize a misreport as a manipulation by considering best/worst-case scenarios to formalize the notion of obvious manipulation. A manipulation is obvious if it either makes the agent better off than truth-telling in the worst case or makes the agent better off than truth-telling in the best case. An allotment rule is ``non-obviously manipulable'' if it has no obvious manipulation.

In the domain of single-peaked preferences, non-obviously manipulable rules are abundant. \cite{arribillaga2025not} characterize a broad class of such rules that, in addition to non-obvious manipulability, satisfy efficiency, own-peak-onliness, and the equal division guarantee. A rule is ``own-peak-only'' if the sole information used from an agent’s preference to determine their allotment is the agent’s peak, and it satisfies the ``equal division guarantee'' if an agent receives their peak allocation whenever that peak coincides with the equal division outcome. Notably, when the property of ``consistency''\footnote{Consistency says that the allocation recommended by a rule in a given problem should remain in agreement with the allocation it would recommend in a reduced problem obtained when some participants leave with their assigned allotments.} is also invoked, the  ``uniform'' rule is single out \citep{arribillaga2025obvious}. However, \cite{arribillaga2025not} also show that the domain of single-peaked preferences can only be slightly enlarged (to the domain of single-plateaued preferences) to still be able to define rules satisfying the four properties that characterize this family of rules.

 %Our aim here is to explore non-obviously manipulable rules defined beyond the domains of single-peaked and single-dipped preferences. In fact, in many real-world allocation problems, agents face institutional or technological constraints that generate simultaneously a most preferred allocation (a peak) and a strictly worst allocation (a dip). As an illustration, consider a group of farmers in need of irrigation water. Each farmer has a technologically optimal amount that maximizes crop yield. However, a subsidy for water use is forfeited whenever allocations fall in a ``forbidden band''. Hence, allocations in this interval are strictly worse than either having a bit less (and keeping the subsidy) or a bit more (and recovering economies of scale). Other situations of this type include fishing quotas with observer requirements (in which quotas below a certain threshold are exempt from a costly onboard observer) or electricity allocation with tariff notches (in which intermediate band consumption entails high costs without scale benefits). 

\begin{figure}[t]
\centering
\begin{tikzpicture}[scale=0.9]
  %--- Función original con top izquierdo y repunte suave a la derecha
  \pgfmathdeclarefunction{Ri}{1}{%
    \pgfmathparse{
      3*exp(-((#1-6)^2)/5)              % pico principal
      - 2*exp(-((#1-3)^2)/1)            % dip
      + 0.6*exp(-((#1-1)^2)/0.30)       % top pequeño a la izquierda
      + 0.15/(1 + exp(-(#1-9)))         % repunte suave hacia la derecha
    }%
  }

  %--- Línea de reflexión vertical x = a (cambiá a si querés)
  \def\ax{5}

  % Ejes
  \draw[->] (0,-2) -- (10.5,-2);
  \draw[->] (0,-2) -- (0,4);

  % Curva reflejada horizontalmente: f_ref(x) = Ri(2a - x)
  \draw[blue,thick, domain=0:10, smooth, samples=400]
    plot (\x, {Ri(2*(\ax)-\x)});

  % Posiciones originales de los marcadores
  \pgfmathsetmacro{\xT}{6}
  \pgfmathsetmacro{\xD}{2.85}
  \pgfmathsetmacro{\yT}{Ri(\xT)}
  \pgfmathsetmacro{\yD}{Ri(\xD)}

  % Posiciones reflejadas: x' = 2a - x
  \pgfmathsetmacro{\xTref}{2*(\ax)-\xT}
  \pgfmathsetmacro{\xDref}{2*(\ax)-\xD}

  % Líneas punteadas y puntos en la gráfica reflejada
  \draw[dashed] (\xTref,-2) -- (\xTref,\yT);
  \fill (\xTref,-2) circle (2pt) node[below] {$t(R_i)$};

  %\draw[dashed] (\xDref,0) -- (\xDref,\yD);
  %\fill (\xDref,0) circle (2pt) node[above] {$d(R_i)$};

  % Etiqueta
  \node at (6.5,1) {$R_i$};
\end{tikzpicture}
\caption{\textsf{Preference $R_i$ with a global maximum at $t(R_i)$.}}
\label{fig:spgd}
\end{figure}

Although single-peaked preferences constitute a significant domain, in many practical contexts agents exhibit preferences that possess a unique global maximum, but whose utility does not decrease monotonically on both sides of the optimum. Figure~\ref{fig:spgd} depicts a typical preference of the kind discussed. Such patterns naturally arise in settings with threshold effects, complementarities, or economies of scale, which are entirely excluded under the single-peaked assumption. 

%Preferences with ``internal valleys'' and preferences projected from multidimensional settings illustrate these phenomena. 

To illustrate these phenomena, consider preferences with ``internal valleys''. For example, a group of farmers in need of irrigation water. Each farmer has a technologically optimal amount that maximizes crop yield. However, a subsidy for water use is forfeited whenever allocations fall within a ``forbidden band''. Hence, allocations in this interval are strictly worse than either having slightly less (and keeping the subsidy) or slightly more (and recovering economies of scale). Other situations of this type include fishing quotas with observer requirements (where quotas below a certain threshold are exempt from costly onboard observation) and electricity allocation with tariff notches (where intermediate consumption entails high costs without scale benefits).
%In a labor market with complex trade-offs, a worker’s optimal weekly schedule is three shifts. With four shifts, the worker feels overworked (utility decreases), but with five shifts, overtime pay partially compensates (utility increases again). These preferences break single-peakedness but preserve a unique global maximum, maintaining sufficient structure for formal analysis.
%Preferences with ``internal valleys'' and preferences projected from multidimensional settings illustrate these phenomena. To illustrate preferences with ``internal valleys'', we can consider labor markets with complex trade-offs where a worker’s optimal weekly schedule is three shifts. With four shifts, the worker feels overworked (utility decreases), but with five shifts, overtime pay partially compensates (utility increases again). These preferences break single-peakedness but preserve a unique global maximum, maintaining sufficient structure for formal analysis.
As additional illustrations, notice that many allocation problems involve multidimensional decision packages (space, time, equipment, money). When projected onto a single dimension, preferences may lose their single-peakedness, yet retain a well-defined global maximum.\footnote{As an example, %To illustrate this setting, 
consider a research group requiring physical space. Utility depends both on office and lab space with an optimum at 40 $m^2$  %square meters 
of office space and 20 $m^2$ of lab space. % on $(m,l) = (\text{office m$^2$}, \text{lab m$^2$})$, with an optimum at $(40,20)$. 
If we reduce this to total square meters, %$(m+l)$, 
the global maximum appears at 60 $m^2$, but the projected function oscillates: %60 m$^2$: includes a laboratory (global optimum); 
with 70 $m^2$ partially unused, costly space provides lower utility; with 80 $m^2$ a second laboratory can be installed and some utility is recovered, but not exceeding the optimum.}   %Application contexts for this setting include: universities and research institutes (allocation of offices, computing resources, and technical staff); urban logistics (assigning cubic meters of storage to firms, where surpassing thresholds enables new production lines); and public health (operating room time allocation, where a hospital may have a global optimum for covering critical procedures, while additional time increases idle staff costs until a threshold allows opening a new surgical program). 

 The previous examples motivate the search for a richer domain of preferences in which to study non-obviously manipulable rules. This is the goal of our work: the only domain restriction we impose is that each preference has a unique global maximum. This assumption is in line with recent work that recognizes satiation of preferences as an important topic ``that deserves more thorough treatment'' \citep[see][]{chatterji2025strategy}.

The results we obtained in our general domain of preferences can be summarized as follows. First, insisting on efficiency precipitates impossibility results. We show that no rule satisfies efficiency, strategy-proofness, and the ``equal division lower bound'', according to which no agent can be allotted an amount they consider worse than equal division (Theorem \ref{Impossibility: eff sp and edlb}). To deal with this impossibility, we then relax strategy-proofness to the combination of non-obvious manipulability and own-peak-onliness.\footnote{The properties of efficiency and strategy-proofness combined imply own-peak-onliness \citep[see][]{sprumont1991division,ching1994alternative}.} Unfortunately, adding efficiency leads to dictatorship in two-agent problems (Theorem \ref{Impossibility: eff NOM and OPO}). Aditionally, for more than two agents, efficient and own-peak-only rules are ``bossy'': an agent, by reporting a new preference, may affect the allotment of other agent without affecting their own (Theorem \ref{Impossibility: eff OPO and non-bossy}).
One might suspect that these negative results stem from the universality of our domain. However, all our impossibility proofs remain valid in much more restrictive domains that satisfy only a minimal richness condition: for every pair of alternatives, there exists a preference placing one at the top and the other at the bottom. For instance, the domain of continuous preferences in \cite{barbera1990strategy} satisfies this minimal richness condition.\footnote{These results are in line with similar findings obtained in classical exchange economies. \cite{momi2013note} shows that (i) efficient and strategy-proof rules do not guarantee positive consumption for all agents, and (ii) efficient, strategy-proof, and non-bossy rules are dictatorial. See also \cite{zhou1991inefficiency} and \cite{serizawa2002inefficiency}.}
We have chosen to work throughout the paper with our general domain because we believe it is of independent interest, and also because the positive results we obtain—and present below—are more compelling when established on a broader domain.

%Next, and in the hope of finding some possibility results, we weaken efficiency to unanimity, whereby the rule must assign the peak amount to each agent whenever the sum of all peaks is equal to the social endowment. We find that the combination of unanimity, NOM, own-peak-onliness, and the equal division guarantee characterizes a large family of rules, which we call ``agreeable''. To define one such rule, assume that each agent is entitled to an equal division of the social endowment. Next, for each preference profile, select a coalition of agents for which the sum of the peaks of their members is exactly the amount of the social endowment to which they are entitled. Then, the rule assigns to each member of the coalition their peak and equal division to everyone else. 

Next, in search of possibility results, we weaken efficiency to ``unanimity'', which requires that the rule assign each agent their peak whenever the sum of all peaks equals the social endowment. We show that the conjunction of unanimity, non-obvious manipulability, own-peak-onliness, and the equal division guarantee characterizes a large family of rules, which we term ``agreeable'' (Theorem \ref{characterization}). To define one such rule, assume that each agent is entitled to an equal share of the social endowment. Next, for each preference profile, select an (agreeable) coalition whose members’ peaks sum exactly to the portion of the endowment to which they are collectively entitled. Then, the rule assigns to each member of the coalition their peak and gives equal division to all remaining agents.
Regrettably, some agreeable rules can be bossy. To address this issue, we identify and characterize all agreeable rules that preclude such behavior (Proposition \ref{prop non-bossy}). Finally, by way of example, we present a natural family of agreeable and non-bossy rules (Proposition \ref{non-bossy agreeable example}).
In this family, each member is parameterized by a collection of nested coalitions, and for each preference profile the selected agreeable coalition is the maximal feasible one within the given collection.

Non-obviously manipulable rules have attracted considerable attention in recent years. In addition to our work on the division problem with single-peaked preferences, \cite{aziz2021obvious} and  \cite{arribillaga2024obvious} analyze non-peaks-only and peaks-only rules, respectively, in voting contexts. Related studies of obvious manipulations in other settings include  \cite{ortega2022obvious}, \cite{psomas2022fair}, and  \cite{arribillaga2025note,arribillaga2025obvious2}.

%Non-obviously manipulable rules have received a fair amount of attention recently. Besides our previously mentioned work in the division problem with single-peaked preferences, \cite{aziz2021obvious} and  \cite{arribillaga2024obvious} study non-peaks-only and peaks-only rules, respectively, in the context of voting.  Other recent papers that study obvious manipulations, in other situations, are  \cite{ortega2022obvious}, \cite{psomas2022fair}, and \cite{arribillaga2023obvious}. 

%%%%%%%%%%%%%%%%%%%%%%%%%%%%%%%%%%%%%%%%%%%%%%%%%%%%%%%%%%%%%%%%%%%%%%%%%%%%%%%%%%%%

%\paragraph{} 
The remainder of the paper is organized as follows. Section~\ref{Preliminaries} introduces the model and the notion of non-obvious manipulability. Section~\ref{section negative} presents several impossibility results under efficiency. Section~\ref{section positive} consists of two parts. Subsection~\ref{subsection positive characterization} develops our main positive results by relaxing efficiency to unanimity and characterizing the class of agreeable rules, whereas Subsection~\ref{subsection non-bossy} characterizes the subclass of non-bossy agreeable rules and provides a broad family of such rules with a simple description. Section~\ref{further results} extends the analysis to economies with individual endowments and introduces peak-responsive rules. Finally, Section~\ref{final remarks} concludes.

\section{Preliminaries}\label{Preliminaries}

%\subsection{Model} \label{2.1}

An amount $\Omega \in \mathbb{R}_{++}$ of an infinitely divisible commodity, the social endowment, has to be distributed among a set of agents $N=\{1,2,\ldots, n\}$ with $n\geq 2$. Each $i \in N$ is equipped with  a preference relation $R_i$ defined over $\mathbb{R}_+$. The only restriction on preferences is that each has a unique \textbf{peak}, denoted by $t(R_i)$. Call $P_i$ and $I_i$ to the strict preference and indifference relations associated with $R_i,$ respectively. Denote by $\mathcal{U}$ the domain of all such preferences. %\footnote{We understand that for positive results (in Section \ref{section positive}) it is convenient to work in the universal domain $\mathcal{U}$. However, when it comes to negative results, narrower domains yield stronger conclusions. Therefore, it is important to highlight that all our results (in Section \ref{section negative}) remain valid working in any subdomain with the following minimal richness condition: for each $x,y \in [0, \Omega]$ there is a preference $R_i$ such that $xP_izP_iy$ for each $z \in [0, \Omega]\setminus \{x,y\}$, i.e., $x$ is the peak and $y$ is the dip of $R_i$.}  
An  \textbf{economy} consists of a profile of preferences $R=(R_j)_{j \in N} \in \mathcal{U}^n$. The set of \textbf{(feasible) allotments} is $X=\{(x_1, \ldots, x_n) \in  \mathbb{R}^n_+ : \sum_{j\in N}x_j= \Omega\}$. 
An   \textbf{(allotment) rule} is a function $\varphi: \mathcal{U}^n \longrightarrow X$. %\textcolor{red}{$\varphi: \mathcal{U}^n \longrightarrow \mathbb{R}^n_+$ such that $\varphi(R) \in X$ for each $R \in \mathcal{U}^n.$}   

Given a rule  $\varphi$, some desirable properties we consider are listed next. The first one entails the usual Pareto optimality criterion. Under this condition, for each economy, the allocation selected by the rule should be such that there is no other allocation that all agents find at least as desirable and at least one agent (strictly) prefers.

\vspace{5 pt}

\noindent \textbf{Efficiency:} For each $R\in \mathcal{U}^n$, there is no $x \in X$ such that $x_iR_i\varphi_i(R)$ for each $i \in N$ and $x_iP_i\varphi_i(R)$ for some $i \in N.$ 

\vspace{5 pt}

\noindent A weaker notion states that when the sum of the peaks is equal to the endowment, the rule must assign, for each agent, his peak amount. Formally,   

\vspace{5 pt}

\noindent \textbf{Unanimity:} For each $R\in \mathcal{U}^n$ such that $\sum_{i\in N}t(R_i)=\Omega$, we have $\varphi_i(R)=t(R_i)$ for each $i \in N.$

\vspace{5 pt}

The following is an informational simplicity property requiring that if an agent unilaterally changes his preference for another one with the same peak, then his allotment remains unchanged. Because of their simplicity, rules satisfying this property are important rules
in their own right and are both useful in practice and extensively studied in the literature.

\vspace{5 pt}

\noindent
\textbf{Own-peak-onliness:} For each  $R \in \mathcal{U}^n$ each $i \in N,$ and each $R_i' \in  \mathcal{U}$ such that $t(R_i')=t(R_i),$ we have $\varphi_i(R )=\varphi_i(R_i', R_{-i}).$

\vspace{5 pt}

Our first fairness property says that each agent must receive an amount he considers at least as good as equal division. 
%It says that whenever equal division is claimed, it has to be delivered.  

\vspace{5 pt}
\noindent
\textbf{Equal division lower bound:} For each  $R \in \mathcal{U}^n$ and each $i \in N$, we have $\varphi_i(R)\ R_i \ \frac{\Omega}{n}$.

\vspace{5 pt}
A mild fairness requirement that weakens the \emph{equal division lower bound}, introduced in \cite{arribillaga2025not}, is presented next. It says that an agent should obtain his share of equal division whenever his peak is the equal division.

\vspace{5 pt}
\noindent
\textbf{Equal division guarantee:} For each  $R \in \mathcal{U}^n$ and each $i \in N$ such that $t(R_i)=\frac{\Omega}{n}$, we have $\varphi_i(R)=\frac{\Omega}{n}$.

\vspace{5 pt}

%\textcolor{red}{FALTA PONER EN ALGUN LADO DICTATORSHIP}

Finally, for our incentive compatibility property, we need some definitions. Let  $i\in N$ and $R_i\in \mathcal{U}$.
The \textbf{option set attainable with $\boldsymbol{R_{i}$ at $\varphi}$}  is 
\begin{equation*}
O^\varphi(R_{i})=\left\{x \in [0, \Omega] %\text{ such that }
\ : \ x=\varphi_i(R_{i},R_{-i}) \text{ for some } R_{-i}\in \mathcal{U}^{n-1}\right\}.\footnote{\cite{barbera1990strategy} were the first to use option sets in the context of preference aggregation.}
\end{equation*}

Preference   $R_i' \in \mathcal{U}$ is an \textbf{obvious manipulation  of  
$\boldsymbol{\varphi$ at $R_i}$} if:
\begin{enumerate}[(i)]
    \item there is  $R_{-i} \in \mathcal{U}^{n-1}$ such that 
$\varphi_i(R_i', R_{-1}) \ P_i \ \varphi_i(R_i, R_{-i})$; and

\item for each 
$x' \in O^\varphi(R_{i}^{\prime })$ there is $x \in O^\varphi(R_{i})$ such that $x'P_i x.$

\end{enumerate}
When $R_i'$ satisfies (i) we say that $R_i'$ is a \textbf{manipulation of $\boldsymbol{\varphi$ at $R_i}$}. A manipulation becomes obvious if (ii) holds, i.e., if each possible outcome under the manipulation is strictly better than some possible outcome under truth-telling.
%\textcolor{red}{

%(i)  (ii) 

The following property precludes such behavior. 
\vspace{5 pt}

\noindent \textbf{Non-obvious manipulability:} For each  $i \in N$, and each  $R_i \in \mathcal{U}$ there is no obvious manipulation of $\varphi$ at $R_i$.  

\vspace{5 pt}

If a rule has no manipulation, it is \textbf{strategy-proof}. Clearly, \emph{strategy-proofness} implies \emph{non-obvious manipulability}.

\begin{remark}   \em  %{\color{red}
The original definition of obvious manipulation introduced by \cite{troyan2020obvious} says that the best possible outcome in $O^\varphi(R_{i}^{\prime })$ should be better than the best possible outcome in $O^\varphi(R_{i})$ or the worst possible outcome in $O^\varphi(R_{i}^{\prime })$ should be better than the worst possible outcome in $O^\varphi(R_{i})$.  
However, under unanimity (or efficiency), the best possible outcome under truth-telling for an agent is always the agent's peak alternative in our model and thus no manipulation is obvious by considering such outcomes. Furthermore, when the set of alternatives is infinite (as in our case), the worst possible outcomes may not be well-defined and a more general definition is necessary. For a formal discussion and comparison of our definition of \emph{non-obvious manipulability} and the original presented by \cite{troyan2020obvious}, see \cite{arribillaga2025not}.
%}
\end{remark}

\section{Negative results with efficiency} \label{section negative}

%As a first attempt, it would be nice to find rules that satisfy strategy-proofness, efficiency, and the equal division lower bound. In the context of economies with several goods and classical preferences, this has already been largely explored with negative results (citas). In economies with one perfectly divisible commodity and single-peaked preferences, the result is positive and the uniform rule is the unique rule that satisfies the three properties (cita). When we enlarge the domain of preferences and consider the universal domain, and in line with the Gibbard-Satterthwaite result, it is expected to obtain a negative result as well. Si bien hemos rastreado este resultado en la literature, no lo hemos encontrado formalmente establecido. Por esto, lo presentamos a continuación. 

As a first goal, it would be valuable to identify (if any) rules that satisfy  \emph{efficiency, strategy-proofness}, and some minimal fairness requirement.  In economies with multiple goods and classical preferences, this question has been extensively studied since the seminal work of \cite{hurwicz1972informationally}, yielding predominantly negative results \citep[see][and references therein]{momi2013note}. 
%the equal division lower bound. In economies with multiple goods and classical preferences, this question has been extensively studied, yielding predominantly negative results \textcolor{red}{(citations)}. 
In contrast, in economies with one perfectly divisible commodity and single-peaked preferences, there are good news: several ``sequential rules'' satisfy welfare lower bounds with respect to pre-specified individual endowments, that could be specialized to equal division \citep{barbera1997strategy}. Moreover, the uniform rule is the unique rule that, in adittion to being \emph{efficient} and \emph{strategy-proof}, satisfies either ``anonymity'' \citep{sprumont1991division}
 or  ``equal treatment of equals'', which says that agents with the same preference are required to obtain the same outcome \citep{ching1994alternative}. 
%the uniform rule is the unique mechanism that satisfies all three properties \textcolor{red}{(citations)}. 

When extending the domain of preferences to our general domain, and in line with the Gibbard-Satterthwaite theorem, negative results are to be expected. 
Next, we show that the combination of \emph{efficiency} and \emph{strategy-proofness} is incompatible with the \emph{equal division lower bound}.\footnote{While we have searched for this result in the literature, we have not found it formally established. Therefore, we present it here for completeness.}

%DECIR ALGO DE ESTE RESULTDAO EN ECONOMIAS DE INTERCAMBIO CON MAS DE UN BIEN (Serizawua Mommi ). 

\begin{theorem}\label{Impossibility: eff sp and edlb}
No rule %defined in $\mathcal{U}$ 
satisfies efficiency, strategy-proofness, and the equal division lower bound.  
\end{theorem}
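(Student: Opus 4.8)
The plan is to reduce the problem to a statement about peaks alone and then exploit the freedom that the general domain $\mathcal{U}$ offers to place equal division inside a ``valley''. First I would invoke the fact, recorded earlier, that \emph{efficiency} together with \emph{strategy-proofness} implies \emph{own-peak-onliness}, so that there is a function $\psi$ with $\varphi(R)=\psi\big(t(R_1),\dots,t(R_n)\big)$ for every $R$. Writing $e=\Omega/n$ for the equal-division share, the key preliminary step is a dichotomy: for every peak vector $t=(t_1,\dots,t_n)$ and every $i\in N$, the coordinate $\psi_i(t)$ must equal either $e$ or $t_i$. Indeed, suppose $\psi_i(t)=c\notin\{e,t_i\}$. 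By the minimal richness of the domain there is a preference $R_i$ with peak $t_i$ at the top and $c$ at the very bottom; since this $R_i$ has peak $t_i$, \emph{own-peak-onliness} gives $\varphi_i(R_i,R_{-i})=c$, while the \emph{equal division lower bound} demands $c\,R_i\,e$. But $c$ is the worst alternative for $R_i$ and $e\neq c$, so $e\,P_i\,c$, a contradiction. Hence $\psi_i(t)\in\{e,t_i\}$ for all $i$.

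Next I would choose a peak vector for which this dichotomy, together with feasibility $\sum_i\psi_i(t)=\Omega$, pins the outcome down to exact equal division. For $n=2$ I take $t_1=t_2=\tfrac{3}{4}\Omega$: then each $\psi_i(t)\in\{\tfrac12\Omega,\tfrac34\Omega\}$, and the only combination summing to $\Omega$ is $(\tfrac12\Omega,\tfrac12\Omega)$. For general $n$ I would let agents $3,\dots,n$ have peak $e$, so that $\psi_i(t)=e$ for these agents (their only admissible value), and give agents $1$ and $2$ the common peak $\tfrac{3}{2}e=\tfrac{3\Omega}{2n}$; feasibility then forces $\psi_1(t)=\psi_2(t)=e$ as well, since the residual to be split by agents $1$ and $2$ is $\Omega-(n-2)e=2e$, whereas $e+\tfrac32 e$ and $2\cdot\tfrac32 e$ both differ from $2e$. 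In every case $\psi(t)$ is the equal-division allocation, so $\varphi(R)=(e,\dots,e)$ for every profile $R$ with this peak vector.

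Finally I would contradict \emph{efficiency} by choosing the valley shapes freely. Using minimal richness, let $R_1$ have peak $\tfrac{3}{2}e$ with $e$ placed at the bottom, so in particular $\tfrac{e}{2}\,P_1\,e$, and let $R_2$ be any preference with the same prescribed peak $\tfrac32 e$. Then the allocation that gives agent $1$ the amount $\tfrac{e}{2}$, gives agent $2$ their peak $\tfrac32 e$, and leaves every other agent at $e$ is feasible (the shares sum to $\tfrac{e}{2}+\tfrac32 e+(n-2)e=\Omega$) and Pareto dominates $(e,\dots,e)$: agent $1$ strictly prefers $\tfrac{e}{2}$ to $e$, agent $2$ is brought to their unique peak, and the remaining agents are unaffected. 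This contradicts \emph{efficiency}, completing the proof. The step I expect to be most delicate is the dichotomy lemma of the first paragraph: it is where \emph{own-peak-onliness} and the richness of $\mathcal{U}$ must be combined carefully, and it is also the point at which the argument would have to be re-examined if one wished to run it on a smaller domain satisfying only the minimal richness condition rather than on all of $\mathcal{U}$.
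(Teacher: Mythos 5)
Your proof has a genuine gap at its very first step, and everything downstream depends on it. You invoke, as a recorded fact, that \emph{efficiency} and \emph{strategy-proofness} imply \emph{own-peak-onliness} (indeed you use the even stronger peaks-only representation $\varphi(R)=\psi\big(t(R_1),\dots,t(R_n)\big)$). But that implication is a theorem about the \emph{single-peaked} domain (Sprumont, 1991; Ching, 1994), and the paper states it only in that form (``when preferences are single-peaked, own-peak-onliness follows from efficiency and strategy-proofness''). Its proof uses single-peakedness essentially: efficiency forces all allotments to the same side of the peaks, and any two single-peaked preferences with the same peak rank alternatives identically on a given side of that peak, so the two strategy-proofness inequalities $\varphi_i(R)\ R_i\ \varphi_i(R_i',R_{-i})$ and $\varphi_i(R_i',R_{-i})\ R_i'\ \varphi_i(R)$ collapse to an equality. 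On $\mathcal{U}$ (or on a minimally rich subdomain) two preferences with the same peak may order everything else in opposite ways, so these two inequalities are compatible with different allotments. In fact the implication fails: on the continuous subdomain with $n=2$, let agent $1$ receive their most preferred amount in $[0,\Omega]$ (ties broken in agent $2$'s favor) and agent $2$ the rest. This rule is \emph{efficient} and \emph{strategy-proof}, yet whenever $t(R_1)>\Omega$ agent $1$'s allotment depends on the entire restriction of $R_1$ to $[0,\Omega]$, not just on the peak, so it is not \emph{own-peak-only}. Hence the dichotomy lemma cannot be launched; proving anything like it (or like $\varphi_i(R)\in\{\frac{\Omega}{n},t(R_i)\}$) from the three hypotheses of the theorem is precisely the hard part, not a citation.

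Granting own-peak-onliness, the remainder of your argument is correct, and in fact it needs less than you claim: the dichotomy can be run profile by profile using own-peak-onliness alone, with no function $\psi$, after which the feasibility computation at the peak vector $\big(\tfrac{3}{2}\tfrac{\Omega}{n},\tfrac{3}{2}\tfrac{\Omega}{n},\tfrac{\Omega}{n},\dots,\tfrac{\Omega}{n}\big)$ and the Pareto improvement $\big(\tfrac{\Omega}{2n},\tfrac{3}{2}\tfrac{\Omega}{n},\tfrac{\Omega}{n},\dots,\tfrac{\Omega}{n}\big)$ are both fine. But the very elementariness of this route is a warning sign. The paper's own proof proceeds by induction on $n$: the two-agent base case is reduced to a public-good social choice problem on $[0,\Omega]$ and settled by the Barber\`a--Peleg (1990) dictatorship theorem on continuous preferences (dictatorship then violates the \emph{equal division lower bound}), and the inductive step freezes one agent at a preference peaked at $\tfrac{\Omega}{n}$, whom the \emph{equal division lower bound} forces to receive exactly $\tfrac{\Omega}{n}$, yielding an $(n-1)$-agent rule with the same three properties. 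The need for that machinery reflects exactly the work your first sentence assumes away; as you yourself suspected, that is where the argument breaks.
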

\begin{proof} The proof is by induction on the number of agents. If $|N|=2$, we can fully describe the allotment problem as a social choice problem of a public good where we have to choose a single point in $[0,\Omega]$ that indicates the allotment of agent 1 and, therefore, also the allotment of agent 2. Given an allotment rule $\varphi: \mathcal{U}^2 \longrightarrow X$ consider the social choice rule $\overline{\varphi}:\mathcal{U}^2 \longrightarrow [0,\Omega]$ such that $\overline{\varphi}(R_1, R_2)=\varphi_1(R_1, \overline{R}_2)$ where $\overline{R}_2$ is such that $x\overline{R}_2 y$ if and only if $(\Omega-x) R_2 (\Omega-y)$ for each pair $x,y \in [0, \Omega]$. Assume that $\varphi$ satisfies \emph{efficiency} and \emph{strategy-proofness}. It is easy to see that $\overline{\varphi}$ satisfies those properties as well.\footnote{The adaptation of the definitions of \emph{efficiency} and \emph{strategy-proofness} to a social choice rule is straightforward.} 
By \cite{barbera1990strategy}, $\overline{\varphi}$ must be a dictatorship on the subdomain of continuous preferences contained in $\mathcal{U}$. Therefore, $\varphi$ does not satisfies the \emph{equal division lower bound} on such domain and, therefore, it does not satisfy this property on $\mathcal{U}$ either. 

Next, assume that $N=\{1,2,\ldots,n\}$ with $n\geq 3$ and, by the inductive hypothesis, that no rule with $n-1$ agents satisfies \emph{efficiency, strategy-proofness}, and the \emph{equal division lower bound}.  Assume, by the sake of contradiction,  that there is a rule $\varphi: \mathcal{U}^{n} \longrightarrow X$ satisfying all three properties. Let  $N'=N \setminus \{n\}$, $\Omega'=\Omega - \frac{\Omega}{n}$,  $X'=\{x \in  \mathbb{R}^{n-1}_+ : \sum_{j\in N'}x_j= \Omega'\}$ and define rule $\varphi': \mathcal{U}^{n-1} \longrightarrow X'$ as follows. For each $(R_1, \ldots, R_{n-1}) \in \mathcal{U}^{n-1}$ and each $j \in N \setminus \{n\}$, $$\varphi_j'(R_1, \ldots, R_{n-1})=\varphi_j(R_1, \ldots, R_{n-1},R_n')$$ where $R'_n \in \mathcal{U}$ is fixed and such that $t(R_n')=\frac{\Omega}{n}$. Note that $\varphi'$ is well-defined since, by the \emph{equal division lower bound}, for each $(R_1, \ldots, R_{n-1}) \in \mathcal{U}^{n-1}$, $\varphi_n(R_1, \ldots, R_{n-1},R_n')=\frac{\Omega}{n}$. As $\varphi$ is \emph{efficient} and \emph{strategy-proof}, so is $\varphi'$. Furthermore, $\frac{\Omega'}{|N'|}=\frac{\Omega-\frac{\Omega}{n}}{n-1}=\frac{\Omega}{n}$. Therefore, since $\varphi$ satisfies the \emph{equal division lower bound}, so does $\varphi'$. This contradicts our inductive hypothesis. Hence, $\varphi$ does not satisfy all three properties.   
\end{proof}

Given this first negative result, the conjunction of \emph{non-obvious manipulability} and \emph{own-peak-onliness} emerges as an interesting relaxation of \emph{strategy-proofness}. Remember that when preferences are single-peaked, \emph{own-peak-onliness} follows from \emph{efficiency} and \emph{strategy-proofness}
\citep[see, for example][]{sprumont1991division,ching1994alternative}.  %   Moreover,  We believe that to restrain the analysis to own-peak-only rules  Furthermore, we will focus on own-peak-only rules. This means that the sole information collected by these rules from an agent’s preference to determine his allotment is his peak amount. Because of their simplicity, own-peak-only rules are important rules in their own right and are both useful in practice and extensively studied in the literature \citep{chatterji2025strategy}.\footnote{DECIR QUE TRABAJAN CON PEAK ONLY}
%Furthermore,  However, since we do not impose strategy-proofness, we explicitly invoke it here. 
In order to find positive results, we also relax the \emph{equal division lower bound} to the minimal fairness condition of the \emph{equal division guarantee}.

However, adapting the ideas in the proof of Theorem 3 in \cite{arribillaga2025not} to our general domain, it can  be shown that  \emph{efficiency, non-obvious manipulability, own-peak-onliness}, and the \emph{equal division guarantee} are also incompatible in our setting. For completeness, we present this result next.

\begin{fact}{\citep{arribillaga2025not}}\label{fact}
No rule %defined in $\mathcal{U}$ 
satisfies efficiency, non-obvious manipulability, own-peak-onliness, and the equal division guarantee. 
\end{fact}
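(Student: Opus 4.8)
The plan is to use \emph{own-peak-onliness} to reduce each agent's set of attainable amounts to a function of his own peak, and then to play the rigidity forced by the \emph{equal division guarantee} against the ``same-sidedness'' forced by \emph{efficiency}.

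First I would record the only consequence of \emph{own-peak-onliness} that is needed: since $\varphi_i(R_i,R_{-i})=\varphi_i(R_i',R_{-i})$ whenever $t(R_i')=t(R_i)$, the option set $O^\varphi(R_i)$ is identical for all preferences sharing a given peak, so I may regard it as attached to that peak. Next I would read off two facts. From the \emph{equal division guarantee}, any report with peak $\frac{\Omega}{n}$ yields exactly $\frac{\Omega}{n}$ at every profile, so the corresponding option set is the singleton $\{\frac{\Omega}{n}\}$. From \emph{efficiency}, evaluating $\varphi$ at a \emph{single-peaked} profile (single-peaked preferences belong to $\mathcal{U}$) and invoking the classical description of the efficient set under single-peakedness, I get that whenever the peaks sum to more than $\Omega$ every agent receives at most his own peak.

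The core step is to prove that for every agent $i$ and every peak $p\neq\frac{\Omega}{n}$ the option set is contained in $\{p,\frac{\Omega}{n}\}$. Suppose instead that some $x_0\notin\{p,\frac{\Omega}{n}\}$ is attainable, say $x_0=\varphi_i(R_i,R_{-i})$ with $t(R_i)=p$. Using the minimal richness of the domain, choose a true preference $\widetilde R_i$ with peak $p$ whose global minimum is exactly $x_0$; by \emph{own-peak-onliness}, truth-telling at $(\widetilde R_i,R_{-i})$ still yields $x_0$, while reporting any peak-$\frac{\Omega}{n}$ preference yields $\frac{\Omega}{n}$. Since $x_0$ is the worst alternative for $\widetilde R_i$ and $x_0\neq\frac{\Omega}{n}$, the amount $\frac{\Omega}{n}$ is strictly preferred to $x_0$ under $\widetilde R_i$. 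This simultaneously verifies condition (i) of obvious manipulation at this profile and condition (ii) (the deviating option set being the single point $\frac{\Omega}{n}$, compared against $x_0$ in the truthful option set). Hence the peak-$\frac{\Omega}{n}$ report is an obvious manipulation of $\varphi$ at $\widetilde R_i$, contradicting \emph{non-obvious manipulability}.

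Finally I would exhibit one profile that breaks everything. Let all agents hold single-peaked preferences with $t(R_1)=\frac{\Omega}{2n}$ and $t(R_j)=\Omega$ for each $j\neq 1$, so the peaks sum to strictly more than $\Omega$. The option-set restriction gives $\varphi_j(R)\in\{\Omega,\frac{\Omega}{n}\}$ for $j\neq 1$ and $\varphi_1(R)\in\{\frac{\Omega}{2n},\frac{\Omega}{n}\}$; feasibility forbids any $\varphi_j(R)=\Omega$, since that would force a zero allotment on some agent, and $0$ belongs to none of these pairs. Therefore $\varphi_j(R)=\frac{\Omega}{n}$ for every $j\neq 1$, which forces $\varphi_1(R)=\Omega-(n-1)\frac{\Omega}{n}=\frac{\Omega}{n}$. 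But \emph{efficiency} at this single-peaked profile requires $\varphi_1(R)\le t(R_1)=\frac{\Omega}{2n}<\frac{\Omega}{n}$, a contradiction. I expect the core step to be the main obstacle: it is where the definition of obvious manipulation must be applied with care, and where the richness of the domain is genuinely used to place the ``bad'' attainable amount $x_0$ at the bottom of a suitable preference. The argument is clean only because two features cooperate: the \emph{equal division guarantee} collapses the deviating option set to a single point, and---as the Remark notes---under \emph{efficiency} the best truthful outcome is always the peak, so obviousness can be detected through the worst-case comparison (ii) alone.
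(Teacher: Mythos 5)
Your proof is correct, but there is nothing in the paper to match it against line by line: the paper does not prove this statement, it imports it as a Fact from \cite{arribillaga2025not} (Theorem 3 there), remarking only that the ideas adapt to the general domain. What can be said is that your argument is a valid, self-contained derivation built from the same toolkit this paper deploys for its other results. Your core step---that \emph{own-peak-onliness}, \emph{non-obvious manipulability}, and the \emph{equal division guarantee} force the option set of a peak-$p$ agent into $\left\{p,\frac{\Omega}{n}\right\}$---is essentially the Claim inside the paper's proof of Theorem \ref{characterization}, proved there by the same device you use: a preference with peak $p$ whose dip sits at the stray outcome $x_0$, against which the deviating option set, collapsed to the singleton $\left\{\frac{\Omega}{n}\right\}$ by the \emph{equal division guarantee}, is an obvious manipulation. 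The difference is that the paper also uses \emph{unanimity} to get the reverse inclusion, which you do not need; in your argument \emph{efficiency} enters exactly once, through same-sidedness at a single single-peaked profile, which cleanly isolates where the impossibility comes from and shows the other three axioms do the bulk of the work. Two minor wording points, neither affecting validity: the preference $\widetilde R_i$ should be described as ranking $x_0$ strictly below every other point of $[0,\Omega]$ (a global minimum over all of $\mathbb{R}_+$ need not exist and is not needed, since option sets live in $[0,\Omega]$); and because your final profile uses genuinely single-peaked preferences, the argument as written establishes the Fact on $\mathcal{U}$ (or any domain containing single-peaked together with peak--dip preferences), which is exactly what the statement requires, though extending it to an arbitrary minimally rich domain in the sense of the paper's remark would need an extra argument for the efficiency step.
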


Thus, if we want to maintain \emph{efficiency, non-obvious manipulability} and \emph{own-peak-onliness}, we need to relax the \emph{equal division guarantee}. Nevertheless, for two-agent economies, the combination of the first three properties implies a dictatorship.%\footnote{\textcolor{red}{By this result, one can deduce that NOM is equivalent to SP under efficiency and own-peak-only, in this context.}} 

%\vspace{5 pt}
%\noindent
%\textbf{Dictatorship:} There is $i \in N$ such that, for each $R \in \mathcal{U}^n$, $\varphi_i(R)=t(R_i)$.

%\vspace{5 pt}

%\subsection{Negative results maintaining efficiency}\label{section further}

\begin{theorem}\label{Impossibility: eff NOM and OPO} Let $|N|=2$. If $\varphi$ satisfies efficiency, non-obvious manipulability, and own-peak-onliness,  then $\varphi$ is dictatorial, i.e., there is $i \in N$ such that, for each $R \in \mathcal{U}^n$, $\varphi_i(R)=t(R_i)$.  
\end{theorem}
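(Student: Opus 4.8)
The plan is to reduce the two-agent problem to a social choice problem on $[0,\Omega]$, exactly as in the proof of Theorem~\ref{Impossibility: eff sp and edlb}, and then exploit the fact that on a minimally rich domain the combination of \emph{non-obvious manipulability} and \emph{own-peak-onliness} collapses back to \emph{strategy-proofness}, so that the Barber\`a--Peleg dictatorship result \citep{barbera1990strategy} applies. Concretely, I would first associate to $\varphi$ the induced rule $\overline{\varphi}(R_1,R_2)=\varphi_1(R_1,\overline{R}_2)$ choosing a point in $[0,\Omega]$, where $\overline{R}_2$ is the ``reflected'' preference (peak at $\Omega-t(R_2)$) so that agent $2$'s ranking over shares of agent $1$ is captured correctly. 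Because $\varphi$ is \emph{efficient} and \emph{own-peak-only}, $\overline{\varphi}$ inherits both: efficiency forces the chosen point to lie in the interval $[\min\{t(R_1),\Omega-t(R_2)\},\max\{t(R_1),\Omega-t(R_2)\}]$ (the Pareto segment), and own-peak-onliness makes $\overline{\varphi}$ depend only on the two peaks.

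The heart of the argument is to upgrade \emph{non-obvious manipulability} to \emph{strategy-proofness} on the subdomain of continuous single-peaked preferences embedded in $\mathcal{U}$. Here I would use the structure of the option sets. Because $\varphi$ is \emph{own-peak-only} and \emph{efficient}, for a fixed $R_1$ with peak $t(R_1)$ the option set $O^\varphi(R_1)$ is an interval (or at least an order-connected set) whose endpoints are determined by the peak of agent~$1$; crucially, the best point in any option set, under any truthful report, is the agent's own peak $t(R_1)$, which is always attainable (take agent~$2$'s peak to be $\Omega-t(R_1)$ and invoke \emph{unanimity}, which follows from \emph{efficiency}). Thus condition~(ii) in the definition of obvious manipulation can never be triggered through the best-case channel, so any profitable manipulation $R_1'$ must fail to be obvious precisely because some outcome in $O^\varphi(R_1')$ is worse than every outcome in $O^\varphi(R_1)$. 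The key step is to show that on the continuous single-peaked subdomain this forces the rule to already be non-manipulable: if a misreport $R_1'$ with a different peak were profitable, one can construct a profile $R_{-1}$ making the manipulation obvious, contradicting \emph{non-obvious manipulability}. This is where I would lean on the single-dimensional geometry---the option set attainable with the misreported peak is an interval sitting entirely on one side of, and strictly better than part of, the truthful option set---to verify condition~(ii).

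Once \emph{strategy-proofness} is established on the continuous subdomain, $\overline{\varphi}$ is an \emph{efficient}, \emph{strategy-proof} social choice rule on a minimally rich domain of continuous preferences over $[0,\Omega]$, and the Barber\`a--Peleg theorem \citep{barbera1990strategy} yields that $\overline{\varphi}$ is dictatorial: there is an agent $i\in\{1,2\}$ whose peak is always selected. Translating back, $\varphi_i(R)=t(R_i)$ for every $R$, which is precisely the claimed dictatorship.

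The main obstacle I anticipate is the upgrade step: proving that \emph{non-obvious manipulability} together with \emph{own-peak-onliness} and \emph{efficiency} actually implies \emph{strategy-proofness} on the relevant subdomain. The subtlety is that \emph{non-obvious manipulability} is genuinely weaker than \emph{strategy-proofness} on rich domains, so one cannot simply quote it; the argument must exploit that (a) the best truthful outcome is always the peak, neutralizing the best-case clause, and (b) in this one-dimensional setting a profitable peak-misreport necessarily produces an option set dominated pointwise, triggering the worst-case clause of obviousness. Getting the quantifiers in condition~(ii) right---``for each $x'\in O^\varphi(R_1')$ there is $x\in O^\varphi(R_1)$ with $x'P_1x$''---and exhibiting the witnessing profile $R_{-1}$ is the delicate part, and is the place where the minimal richness of the domain is essential.
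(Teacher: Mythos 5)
Your central ``upgrade'' step is a genuine gap, not a routine verification. In this two-agent setting the claim that \emph{efficiency}, \emph{own-peak-onliness}, and \emph{non-obvious manipulability} imply \emph{strategy-proofness} is essentially equivalent to the theorem itself (an efficient, own-peak-only dictatorship is automatically strategy-proof), so it cannot be sketched and then quoted; and the mechanism you offer for it does not survive in $\mathcal{U}$. Option sets need not be intervals, and phrases like ``sitting entirely on one side'' or ``dominated pointwise'' carry no preference content once preferences are not single-peaked; even your preliminary claim that efficiency confines the outcome to the segment $[\min\{t(R_1),\Omega-t(R_2)\},\max\{t(R_1),\Omega-t(R_2)\}]$ is false outside the single-peaked domain. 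Moreover, the natural way to exploit minimal richness---given a profitable misreport $R_1'$ at $(R_1,R_2)$, use \emph{own-peak-onliness} to replace the true preference by a same-peak preference $\widetilde{R}_1$ whose dip is the received amount $d=\varphi_1(R_1,R_2)$, so that any change in outcome is profitable---makes $R_1'$ an \emph{obvious} manipulation only if $d\notin O^\varphi(R_1')$, and this can fail: against an opponent whose peak is $\Omega-d$, efficiency and own-peak-onliness are perfectly consistent with agent $1$ being handed exactly $d$ under the misreport. The paper's proof is built precisely to close this loophole: its Step 1 shows that at every profile some agent receives their peak; its Step 2 works at \emph{equal-peak} profiles and uses peak-and-dip preferences to show that, for each value $x$, the identity of the agent who receives their peak when their peak equals $x$ cannot depend on the opponent's preference; its Step 3 then uses feasibility to force a single such agent across all values of $x$. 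No substitute for this machinery appears in your sketch.

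Independently, the endgame of your reduction is incorrect as stated. The dictatorship theorem of \cite{barbera1990strategy} applies on the domain of \emph{all} continuous preferences; it is false on the continuous \emph{single-peaked} subdomain where you propose to establish strategy-proofness: generalized median (phantom-voter) rules are strategy-proof, efficient, and non-dictatorial there, and that subdomain also violates the paper's minimal richness condition, since single-peaked preferences can only have dips at $0$ or $\Omega$. So even if your upgrade succeeded on the single-peaked subdomain, dictatorship would not follow; you would need strategy-proofness on the full continuous domain, exactly where your single-peaked geometric reasoning is unavailable. Note also that the paper only uses the reduction-plus-\cite{barbera1990strategy} strategy in Theorem \ref{Impossibility: eff sp and edlb}, where strategy-proofness is an \emph{assumption}; here, where it must be derived, the paper argues directly and never passes to a social choice problem. (The one unproblematic step of your plan is the final extension of dictatorship from a subdomain to all of $\mathcal{U}^2$, which does follow from own-peak-onliness and feasibility with two agents.)
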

\begin{proof}
    Let $N=\{1,2\}$ and assume $\varphi$ satisfies \emph{efficiency, non-obvious manipulability} and \emph{own-peak-onliness}. The proof consists of three steps:

    \medskip

    \noindent \textbf{Step 1: for each $\boldsymbol{R \in \mathcal{U}^2$ there is $i \in \{1,2\}$ such that $\varphi_i(R)=t(R_i)}$.} Otherwise, there is $\overline{R} \in \mathcal{U}^2$ such that $\varphi_i(\overline{R})\neq t(R_i)$ for each $i \in \{1,2\}$. Since $\Omega-t(\overline{R}_1)\neq \varphi_2(\overline{R})$, there is $R_2' \in \mathcal{U}$ with $t(R'_2)=t(\overline{R}_2)$ and $(\Omega-t(\overline{R}_1)) \ P_2' \ \varphi_2(\overline{R})$. By \emph{own-peak-onliness}, $\varphi(\overline{R})=\varphi(\overline{R}_1, R_2')$. Let $x \equiv (t(\overline{R}_1), \Omega-t(\overline{R}_1))$. Then, $x_1 \ \overline{P}_1 \ \varphi_1(\overline{R}_1, R_2')$ and $x_2 \ P_2' \ \varphi_2(\overline{R}_1, R_2')$, contradicting the \emph{efficiency} of $\varphi$.  

    \medskip 

    \noindent \textbf{Step 2: for each $\boldsymbol{x \in [0,\Omega]$ there is $i \in \{1,2\}$ such that $\varphi_i(R)=t(R_i)}$ for each  $\boldsymbol{R \in \mathcal{U}^2}$ with $\boldsymbol{t(R_i)=x}$.} Let $x \in [0,\Omega]$ and consider $\overline{R} \in \mathcal{U}^2$ such that $t(\overline{R}_1)=t(\overline{R}_2)=x$. By Step 1 we can assume, w.l.o.g., that $\varphi_1(\overline{R})=t(\overline{R}_1)$. Next, we prove that 
    \begin{equation}\label{step 2 eq 1}
        \varphi_1(\overline{R}_1, R_2)=t(\overline{R}_1) \text{ for each }R_2 \in \mathcal{U}.
    \end{equation}
    Assume, by way of contradiction, that there is $R_2' \in \mathcal{U}$ such that \begin{equation}\label{step 2 eq 2}
        \varphi_1(\overline{R}_1, R_2') \neq t(\overline{R}_1).
    \end{equation}
    There are two cases to consider:
    \begin{itemize}
        \item[$\boldsymbol{1}$.] $\boldsymbol{x\neq \frac{\Omega}{2}}$. Notice first that  
        \begin{equation}\label{step 2 eq 3}
        \varphi_2(\overline{R}_1, R_2') \neq \varphi_2(\overline{R}).    
        \end{equation}
        Otherwise, $ \varphi_2(\overline{R}_1, R_2') = \varphi_2(\overline{R})=\Omega - t(\overline{R}_1)$ implies $\varphi_1(\overline{R}_1, R_2') = t(\overline{R}_1),$ contradicting \eqref{step 2 eq 2}. Also,  by this case's hypothesis  $t(\overline{R}_2) \neq \Omega - t(\overline{R}_1)=\varphi_2(\overline{R})$ and, thus,  $t(\overline{R}_2) \neq \varphi_2(\overline{R})$. Therefore, there is $\widetilde{P}_2 \in \mathcal{U}$ such that $t(\widetilde{P}_2)=t(\overline{P}_2)$ and $y \ \widetilde{P}_2 \ \varphi_2(\overline{R})$ for each $y \in [0,\Omega] \setminus \{\varphi_2(\overline{R})\}$. By \emph{own-peak-onliness}, $\varphi_2(\overline{R}_1, \widetilde{R}_2) = \varphi_2(\overline{R}).$ By \eqref{step 2 eq 3} and the definition of $\widetilde{P}_2$,  $$\varphi_2(\overline{R}_1, R_2') \ \widetilde{P}_2 \ \varphi_2(\overline{R}_1, \widetilde{R}_2)$$ and therefore $R_2'$ is a manipulation of $\varphi$ at $\widetilde{R}_2$. Next, let $R_1' \in \mathcal{U}$ be such that $t(R_1')\neq t(\overline{R}_1)$. By Step 1, $\varphi_2(R_1',R_2')=\Omega-t(R_1')$ or $\varphi_2(R_1',R_2')=t(R_2')$ (and $t(R_2') \neq \varphi_2(\overline{R})$ because otherwise by \eqref{step 2 eq 3} and Step 1 we contradict \eqref{step 2 eq 2}). Thus, in either case we obtain that $\varphi_2(R_1',R_2') \neq \varphi_2(\overline{R})=\varphi_2(\overline{R}_1, \widetilde{R}_2)$ and $R_2'$ is an obvious manipulation of $\varphi$ at $\widetilde{R}_2$, contradicting that $\varphi$ is \emph{non-obviously manipulable}. Hence, \eqref{step 2 eq 1} holds. 

        \item[$\boldsymbol{2}$.] $\boldsymbol{x= \frac{\Omega}{2}}$. Let $y \in [0, \Omega]\setminus \{x, \varphi_1(\overline{R}_1, R_2')\}$ and consider $\widehat{R}_1 \in \mathcal{U}$ such that $t(\widehat{R}_1)=y$. By the previous case, $\varphi_1(\widehat{R}_1, R_2')=t(\widehat{R}_1)=y$. Let $\widetilde{P}_1 \in \mathcal{U}$ be such that $t(\widetilde{P}_1)=t(\overline{P}_1)$ and $z \ \widetilde{P}_1 \ \varphi_1(\overline{R}_1, R_2')$ for each $z \in [0,\Omega] \setminus \{\varphi_1(\overline{R}_1, R_2')\}$. By \emph{own-peak-onliness}, $\varphi_1(\widetilde{R}_1, R_2')=\varphi_1(\overline{R}_1, R_2')$. By the definition of $\widetilde{P}_1$, $$\varphi_1(\widehat{R}_1, R_2') \ \widetilde{P}_1 \ \varphi_1(\widetilde{R}_1, R_2')$$ and therefore $\widehat{R}_1$ is a manipulation of $\varphi$ at $\widetilde{R}_1$. Next, let $\widehat{R}_2 \in \mathcal{U}$. By Step 1, $\varphi_1(\widehat{R}_1,\widehat{R}_2)=t(\widehat{R}_1)$ or  $\varphi_1(\widehat{R}_1,\widehat{R}_2)=\Omega -t(\widehat{R}_2)$. In either case, $\varphi_1(\widehat{R}_1,\widehat{R}_2)\neq \varphi_1(\overline{R}_1, R_2')$, so $\widehat{R}_1$ is an obvious manipulation of $\varphi$ at $\widetilde{R}_1$, contradicting that $\varphi$ is \emph{non-obviously manipulable}. Hence, \eqref{step 2 eq 1} holds.
    \end{itemize}
    To finish this step, let $R_1 \in \mathcal{U}$ be such that $t(R_1)=t(\overline{R}_1)$. The result follows from \emph{own-peak-onliness}. 

    \medskip 

    \noindent \textbf{Step 3: there is $\boldsymbol{i \in \{1,2\}}$ such that for each $\boldsymbol{x \in [0,\Omega]}$ and each $\boldsymbol{R \in \mathcal{U}^2}$ with $\boldsymbol{t(R_i)=x}$, $\boldsymbol{\varphi_i(R)=t(R_i)}$}. Assume this is not the case. By Step 2, there are distinct $x,y \in [0,\Omega]$ such that, w.l.o.g., 
    \begin{equation}\label{step 3 eq 1}
        \varphi_1(R)=p(R_1) \text{ whenever }R \in \mathcal{U}^2 \text{ is such that }t(R_1)=x,
    \end{equation}
    and 
    \begin{equation}\label{step 3 eq 2}
        \varphi_2(R)=p(R_2) \text{ whenever }R \in \mathcal{U}^2 \text{ is such that }t(R_2)=y.
    \end{equation}
    Let $\overline{R} \in \mathcal{U}^2$ be such that $t(\overline{R}_1)=x$ and $t(\overline{R}_2)=y$. Therefore, \eqref{step 3 eq 1}, \eqref{step 3 eq 2}, and  feasibility imply
    \begin{equation}\label{step 3 eq 3}
\Omega=\varphi_1(\overline{R})+\varphi_2(\overline{R})=x+y.
    \end{equation}
    Now, let $z \in [0,\Omega]\setminus \{x,y\}$. By Step 2 we can assume, w.l.o.g., that 
    \begin{equation}\label{step 3 eq 4}
        \varphi_1(R)=p(R_1) \text{ whenever }R \in \mathcal{U}^2 \text{ is such that }t(R_1)=z.
    \end{equation} Let $\widetilde{R}_1 \in \mathcal{U}$ be such that $t(\widetilde{R}_1)=z$. Using  \eqref{step 3 eq 4}, \eqref{step 3 eq 2}, and feasibility we get $$\Omega=\varphi_1(\widetilde{R}_1, \overline{R}_2)+\varphi_2(\widetilde{R}_1, \overline{R}_2)=z+y.$$ This last fact together with \eqref{step 3 eq 3} imply $z=x$, a contradiction. This completes the proof of this step.

\medskip

To conclude the proof of the Theorem, by Step 3, there is $i \in \{1,2\}$ such that $\varphi_i(R)=t(R_i)$ for each $R \in \mathcal{U}^2$, and thus $\varphi$ is \emph{dictatorial}. 
\end{proof}

%When we have three agents or more we observe that just efficiency and own-peak-only lead to bossy rules and, therefore, in this case, the existence of NOM rules must be considered within  bossy rules, which are difficult to justify and describe in general. 

For more than two agents, %efficiency and own-peak-onliness alone lead to 
it is easy to find rules such that the change in an agent's preference may not alter the amount assigned to that agent but instead affect someone else's. %\citep[see, for example,][in the classical exchange model]{momi2013note}. 
The following property precludes such behavior:

\vspace{5 pt}

\noindent \textbf{Non-bossiness:} For each $R\in \mathcal{U}^n$, each $i\in N$ and $R_i'\in \mathcal{U}$,  $\varphi_i(R'_i,R_{-i})=\varphi_i(R)$ implies $\varphi(R'_i,R_{-i})=\varphi(R)$.

\vspace{5 pt}

%En particular, que strategy-proof y efficiencia implican bossy con preferencias clasicas en el modelo de exchange MOMI. Acá si remplazamos SP por OPO y aún así seguimos cayendo en bossy. 
If a rule violates this property, we say that it is \textbf{bossy}.

Our next result states that the combination of  \emph{efficiency} and \emph{own-peak-onliness} alone generates an incompatibility with  \emph{non-bossiness}.

\begin{theorem}\label{Impossibility: eff OPO and non-bossy} Let $|N|>2$. Then, no rule satisfies efficiency, own-peak-onliness, and non-bossiness.
\end{theorem}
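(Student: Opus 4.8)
The plan is to combine \emph{own-peak-onliness} and \emph{non-bossiness} to collapse $\varphi$ into a function of the peaks alone, and then to show that no such peak-only rule can be \emph{efficient} when $n>2$. For the first step, I would fix $R\in\mathcal{U}^n$, $i\in N$ and $R_i'\in\mathcal{U}$ with $t(R_i')=t(R_i)$. By \emph{own-peak-onliness}, $\varphi_i(R_i',R_{-i})=\varphi_i(R)$; since agent $i$'s allotment is unchanged, \emph{non-bossiness} upgrades this to $\varphi(R_i',R_{-i})=\varphi(R)$. Replacing preferences one coordinate at a time, any two profiles with the same peak vector produce the same allocation, so there is a map $f:\mathbb{R}_+^n\to X$ with $\varphi(R)=f(t(R_1),\dots,t(R_n))$. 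The crucial consequence I will exploit is that, for a fixed peak vector $t$, the \emph{single} allocation $f(t)$ must be Pareto optimal for \emph{every} profile in $\mathcal{U}^n$ whose peaks equal $t$.

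Next I would determine which allocations survive this strong ``robust efficiency'' requirement. The key elementary fact is that, for an agent with peak $a$ holding amount $c$, the domain $\mathcal{U}$ contains a preference with peak $a$ under which $c+\epsilon$ is strictly preferred to $c$ (for small $\epsilon>0$) precisely when $c\neq a$, and one under which $c-\epsilon$ is strictly preferred to $c$ precisely when $c\neq a$ and $c>0$; for $c>a$ these use a preference that rises again above the peak (a valley), and for $0<c<a$ a preference with a local maximum just below $c$. Hence, given $c=f(t)$, a Pareto-improving transfer of a small amount from some agent $b$ to some agent $a$ can be engineered in a suitable profile with peaks $t$ as soon as there exist two distinct agents with $c_a\neq t_a$, $c_b\neq t_b$ and $c_b>0$. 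Robust efficiency therefore forces $f(t)$ to avoid such a pair: either at most one agent is off its peak, or every off-peak agent receives exactly $0$.

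Finally I would produce a peak vector where both escape routes are blocked. Taking all peaks equal to a common value $a\in\left(\tfrac{\Omega}{n-1},\tfrac{\Omega}{n-2}\right)$—an interval that is nonempty exactly because $n>2$—gives $(n-1)a>\Omega$, so in any feasible allocation at least two agents must lie strictly off their peak; and $\tfrac{\Omega}{a}\in(n-2,n-1)$ is not an integer, so no set of agents sitting at the common peak $a$ can sum to $\Omega$, which excludes the ``all off-peak agents at $0$'' configuration. Thus every feasible allocation at this peak vector admits a Pareto-improving pairwise transfer in some profile, contradicting the \emph{efficiency} of $f$, and hence of $\varphi$.

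I expect the delicate point to be the robust-efficiency characterization, and specifically the boundary case $c_b=0$: an agent receiving nothing cannot act as the giver in a transfer. It is precisely this subtlety that forces the two requirements $(n-1)a>\Omega$ and the non-integrality of $\tfrac{\Omega}{a}$ to be imposed \emph{simultaneously}; ignoring the zero-allotment case would wrongly suggest that the profile with all peaks equal to $\Omega$ already yields a contradiction, whereas there the allocation placing one agent at $\Omega$ and the rest at $0$ is in fact robustly efficient.
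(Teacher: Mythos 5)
Your proof is correct, and it takes a genuinely different route from the paper's---one that is in fact more careful. The paper works at the single profile where every peak equals $\Omega$: there it picks two agents $j,j'$ with $\varphi_j(R)<\Omega$ and $\varphi_{j'}(R)<\Omega$, uses \emph{own-peak-onliness} and \emph{non-bossiness} only locally (to replace those two agents' preferences, keeping peaks and hence the whole allocation fixed), and then Pareto-improves by a transfer: it takes $x,y\in[0,\Omega]\setminus\{\varphi_j(R),\varphi_{j'}(R)\}$ with $x+y=\varphi_j(R)+\varphi_{j'}(R)$ and preferences under which $j$ and $j'$ strictly prefer $x$ and $y$ to their current shares. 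You instead perform the full peaks-only collapse, characterize the allocations that are Pareto optimal for \emph{every} profile with a given peak vector (at most one agent off its peak, or else all off-peak agents at zero), and then pick a common peak $a\in\left(\tfrac{\Omega}{n-1},\tfrac{\Omega}{n-2}\right)$ that kills both configurations simultaneously. The comparison favors you on rigor: the paper's choice of $x,y$ is impossible when $\varphi_j(R)=\varphi_{j'}(R)=0$, i.e., when the rule assigns a permutation of $(\Omega,0,\ldots,0)$ at the all-$\Omega$ profile; and, as you note in your closing paragraph, that allocation is genuinely efficient for every profile whose peaks all equal $\Omega$ (any other feasible allocation must move the satiated agent off $\Omega$ and hence hurt them), so no contradiction can be extracted from the paper's chosen profile in that case. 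Your non-integrality requirement $\Omega/a\notin\mathbb{Z}$ exists precisely to close this loophole, so your argument is not merely an alternative but a repair of a real gap in the published proof; the cost is the additional (though standard) machinery of the peaks-only reduction and the robust-efficiency characterization, which the paper's shorter transfer argument dispenses with.
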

\begin{proof}
    Assume by contradiction that there exits $\varphi$ satisfying \emph{efficiency}, and \emph{own-peak-onliness}, and \emph{non-bossiness}. Let us consider a preference profile $R\in \mathcal{U}^n$ where $t(R_i)=\Omega$ for each $i\in N$. As $|N|>2$, there are $j,j'\in N$ such that $\varphi_j(R)<\Omega$ and $\varphi_{j'}(R)<\Omega$. 
    Take $x,y\in [0, \Omega]\setminus\{\varphi_j(R),\varphi_{j'}(R)\}$ such that $x+y=\varphi_j(R)+\varphi_{j'}(R)$. By \emph{own-peak-onliness} and \emph{non-bossiness}, we can assume that $xP_j\varphi_j(R)$ and $yP_{j'}\varphi(R)$ and so $\varphi(R)$ is not \emph{efficient} because the allotment that assigns $x$ to $j$, $y$ to $j'$ and $\varphi_i(R)$ for each $i\in N\setminus \{j,j'\}$ is a Pareto improvement over $\varphi(R)$.  
\end{proof}

%\textcolor{red}{Por tal motivo esto sugirere que el camino a seguir para obtener resultados positivos es el relajamiento de Eficiencia. }

\begin{remark}\rm
    It is important to highlight that all our impossibility results  remain valid working in any subdomain $\mathcal{D} \subseteq \mathcal{U}$ with the following minimal richness condition: for each pair of distinct $x,y \in [0, \Omega]$ there is a preference $R_i \in \mathcal{D}$ such that $xP_izP_iy$ for each $z \in [0, \Omega]\setminus \{x,y\}$, i.e., $x$ is the peak and $y$ is the dip of preference $R_i$.

\end{remark}

\section{Positive results relaxing efficiency to unanimity: agreeable rules} \label{section positive}

In this section, we relax \emph{efficiency} to \emph{unanimity}. In this case, we get positive results and characterize the class of of all rules that satisfy \emph{unanimity, non-obvious manipulability, own-peak-onliness}, and the \emph{equal division guarantee}. Within such class we identify a wide subclass of \emph{non-bossy} rules.

\subsection{Agreeable rules: definition and characterization}\label{subsection positive characterization}

To define one such rule, assume that each agent is entitled to an equal share of the social endowment. Next, for each preference profile, select a coalition whose members’ peaks sum exactly to the portion of the endowment to which they are collectively entitled. Then, the rule assigns to each member of the coalition their peak and gives equal division to all remaining agents. 
%If each agent is entitled to to an equal share of the social endowment, then a colaition whose members’ peaks sum exactly to the portion of the endowment to which they are collectively entitled can agree on redistributing their shares of the social endowment among themselves. 
Formally, given an economy $R \in \mathcal{U}^n$ and a coalition $S \subseteq N$, we say that \textbf{$\boldsymbol{S$ is agreeable for  $R}$} if $$\sum_{j \in S}t(R_j)=\frac{|S|}{|N|}\Omega.$$ For each $R \in \mathcal{U}^n$, denote by $\mathcal{A}(R)$ to the set of all agreeable coalitions for $R$.\footnote{Observe that $\emptyset \in \mathcal{A}(R)$ for each $R\in \mathcal{U}^n$.} An \textbf{agreeable selection} is a function $\mathcal{S}: \mathcal{U}^n \longrightarrow 2^N$ that satisfies:
\begin{enumerate}[(i)]
    \item $\mathcal{S}(R) \in \mathcal{A}(R)$ for each $R \in \mathcal{U}^n$,

    %\item for each $i \in N$, each  $R \in \mathcal{U}^n$, and each $R_i' \in \mathcal{U}$, $S(R_i',R_{-i})=S(R)$ whenever $t(R_i')=t(R_i)$, and 

    \item for each  $R \in \mathcal{U}^n$, each $i \in N$, and each $R_i' \in \mathcal{U}$ with $t(R_i')=t(R_i)$, $$i \in \mathcal{S}(R) \text{ if and only if } i \in \mathcal{S}(R_i',R_{-i}),$$

    \item  $N \in \mathcal{A}(R)$ implies $\mathcal{S}(R)=N$ for each $R \in \mathcal{U}^n$.
\end{enumerate}

An agreeable selection picks agreeable coalitions in an \emph{own-peak-only} way (conditions (i) and (ii)) and always chooses the grand coalition when such a coalition is agreeable (conditions (iii))  which will be necessary to unanimity .

We can associate a rule with each agreeable selection $\mathcal{S}$ as follows:

\medskip 

\noindent \textbf{Agreeable rule, $\boldsymbol{\varphi^{\mathcal{S}}}$:} for each $R \in  \mathcal{U}^n$ and each $i \in N$,  
$$\varphi_i(R)=\begin{cases}
    t(R_i) & \text{if } i \in \mathcal{S}(R) \\
    \frac{\Omega}{n} & \text{otherwise}
\end{cases}$$
where $\mathcal{S}:\mathcal{U}^n \longrightarrow 2^N$ is an agreeable selection.

\medskip

We can describe an agreeable rule by means of a two-step procedure. Fix a profile of preferences. In the first step, an agreeable coalition is selected. In the second step, agents in the agreeable coalition are given their peaks, whereas the remaining agents obtain equal division .   

Observe that these rules satisfy the following property: whenever either all agents  demand less than equal division or all agents demand more than equal division, everyone gets equal division. 

Next, we present an example of agreeable rule. 

%Poner como ejemplo el status quo igualitario que asigna el reparto igualitario salvo en los casos donde tenemos unanimidad (donde cada uno recibe su top). 

\begin{example}
Consider the agreeable selection $\mathcal{S}_q: \mathcal{U}^n \longrightarrow 2^N$ such that $\mathcal{S}(R)=\emptyset$ if $N \notin \mathcal{A}(R)$. %$$S_q(R)=\begin{cases}
 %   \emptyset & \text{ if }N \notin \mathcal{A}(R) \\
  %  N & \text{ otherwise}
%\end{cases}$$  
The agreeable rule $\varphi^{S_{q}}$ associated with this selection is called the \textbf{egalitarian status quo} rule. This rule assigns equal division to all agents except when we have unanimity (i.e., the sum of all peaks is equal to the social endowment), in which case each one receives their peak.    \hfill $\Diamond$ 
\end{example}

\begin{remark}\label{EDG} \em
    It is clear from the definition that agreeable rules satisfy the \emph{equal division guarantee}. 
\end{remark}

It turns out that agreeable rules meet our incentive compatibility criterion.

\begin{proposition}\label{t1}
    Any agreeable rule satisfies non-obvious manipulability.
\end{proposition}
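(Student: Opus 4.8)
The plan is to exploit the fact that an agreeable rule can only ever hand agent $i$ one of two amounts, and then to show that the equal-division amount alone already blocks clause (ii) in the definition of an obvious manipulation. Fix an agreeable selection $\mathcal{S}$ and write $\varphi=\varphi^{\mathcal{S}}$. The first step is to read off from the definition of $\varphi$ that for every $R_i\in\mathcal{U}$ and every $R_{-i}\in\mathcal{U}^{n-1}$ one has $\varphi_i(R_i,R_{-i})\in\{t(R_i),\tfrac{\Omega}{n}\}$, according to whether or not $i\in\mathcal{S}(R_i,R_{-i})$. Consequently the option set is tightly constrained: $O^{\varphi}(R_i)\subseteq\{t(R_i),\tfrac{\Omega}{n}\}$ for every $R_i$. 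This reduces the whole argument to comparing two alternatives under $R_i$.

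The crux---and the step I expect to be the only real obstacle---is a lemma asserting that equal division is always attainable: $\tfrac{\Omega}{n}\in O^{\varphi}(R_i)$ for every $R_i\in\mathcal{U}$. If $t(R_i)=\tfrac{\Omega}{n}$ this is immediate, since then both possible outcomes coincide with $\tfrac{\Omega}{n}$. If $t(R_i)\neq\tfrac{\Omega}{n}$, I would exhibit a profile $R_{-i}$ for which no agreeable coalition contains $i$, so that $i\notin\mathcal{S}(R_i,R_{-i})$ for every agreeable selection and hence $\varphi_i(R_i,R_{-i})=\tfrac{\Omega}{n}$. Concretely, let all agents other than $i$ report a common peak $c\ge 0$; then $\{i\}$ is not agreeable (because $t(R_i)\neq\tfrac{\Omega}{n}$), and a coalition $S\ni i$ with $|S|=s\ge 2$ is agreeable only if $t(R_i)+(s-1)c=\tfrac{s}{n}\Omega$, which pins down a single value of $c$ for each $s$. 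Choosing $c$ outside this finite set of at most $n-1$ forbidden values leaves no agreeable coalition containing $i$, which is exactly what is needed. Note that this direction uses only the membership structure of $\mathcal{A}(R)$; condition (iii) of an agreeable selection plays no role here.

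With the lemma in hand the conclusion is short. Fix $i$, a true preference $R_i$, and an arbitrary $R_i'\in\mathcal{U}$; I must show $R_i'$ is not an obvious manipulation, for which it suffices to refute clause (ii). Take the witness $x'=\tfrac{\Omega}{n}$, which lies in $O^{\varphi}(R_i')$ by the lemma applied to $R_i'$. I then claim there is no $x\in O^{\varphi}(R_i)$ with $x'\,P_i\,x$: every such $x$ lies in $\{t(R_i),\tfrac{\Omega}{n}\}$, and since $t(R_i)$ is the unique global maximum of $R_i$ we have $t(R_i)\,R_i\,\tfrac{\Omega}{n}$, so $\tfrac{\Omega}{n}\,P_i\,t(R_i)$ fails, while $\tfrac{\Omega}{n}\,P_i\,\tfrac{\Omega}{n}$ fails trivially. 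Hence clause (ii) cannot hold for this $x'$, so $R_i'$ is not an obvious manipulation; as $i$, $R_i$ and $R_i'$ were arbitrary, $\varphi$ is non-obviously manipulable. The entire weight of the proof thus rests on the attainability lemma: once equal division is known to be a universally available option, the peak-maximality of $t(R_i)$ does the rest.
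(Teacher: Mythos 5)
Your proof is correct, and its overall architecture coincides with the paper's: first bound the option set by $O^{\varphi}(R_i)\subseteq\{t(R_i),\tfrac{\Omega}{n}\}$, then show that $\tfrac{\Omega}{n}$ is always attainable, and finally use $x'=\tfrac{\Omega}{n}$ as the witness under which clause (ii) of an obvious manipulation must fail, since nothing in $\{t(R_i),\tfrac{\Omega}{n}\}$ is strictly worse than $\tfrac{\Omega}{n}$ for $R_i$ (the peak being the unique global maximum). Where you genuinely diverge is in the proof of the attainability lemma. The paper lets all agents other than $i$ report peak $\tfrac{\Omega}{n}$; the definition of an agreeable rule then gives each of them exactly $\tfrac{\Omega}{n}$ (the equal division guarantee of Remark \ref{EDG}), and feasibility leaves $\tfrac{\Omega}{n}$ for agent $i$. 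You instead give the others a common peak $c$ chosen outside the finitely many values solving $t(R_i)+(s-1)c=\tfrac{s}{n}\Omega$ for $s=2,\ldots,n$, so that no agreeable coalition contains $i$, whence $i\notin\mathcal{S}(R)$ and $\varphi_i(R)=\tfrac{\Omega}{n}$ directly from the rule's definition. Both arguments are valid; the paper's is shorter, while yours is slightly more informative about the class: it invokes only condition (i) of an agreeable selection, so it shows that \emph{non-obvious manipulability} would persist even if conditions (ii) and (iii) were dropped (those conditions matter for \emph{own-peak-onliness} and \emph{unanimity} in Theorem \ref{characterization}, not for this proposition). Relatedly, you correctly omit the reverse inclusion $t(R_i)\in O^{\varphi}(R_i)$ for $t(R_i)\leq\Omega$, which the paper proves via condition (iii) but never actually uses in its non-obvious-manipulability argument.
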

\begin{proof}
Let $\varphi: \mathcal{U}^n \longrightarrow X$ be an agreeable rule. Then, there is an agreeable selection $\mathcal{S}: \mathcal{U}^n \longrightarrow 2^N$ such that $\varphi=\varphi^{\mathcal{S}}$. Let $i \in N$ and $R_i \in \mathcal{U}$. First, we show that 
\begin{equation}\label{prueba NOM 1}
        O^\varphi(R_i)=\begin{cases}
            \left\{t(R_i),\frac{\Omega}{n}\right\} & \text{ if }t(P_i)\leq \Omega\\
            \left\{\frac{\Omega}{n}\right\} & \text{ otherwise.}
        \end{cases}
    \end{equation}
 %\textcolor{red}{cuidado que esto es cierto si $t(P_i)\leq\Omega$ sino  $O^\varphi(R_i)=\left\{\frac{\Omega}{n}\right\}$}
%\medskip
%\noindent \textbf{Claim: $\boldsymbol{O^\varphi(R_i)=\left\{t(R_i), \frac{\Omega}{n}\right\}}$}. 
%Let $R_i \in \mathcal{U}$. 
If $t(R_i)>\Omega$, $i \notin \mathcal{S}(R)$.  Therefore, $O^\varphi(R_i)=\{\frac{\Omega}{n}\}$. Now, consider the case $t(P_i)\leq \Omega$. By the definition of an agreeable rule, $O^\varphi(R_i)\subseteq\left\{t(R_i), \frac{\Omega}{n}\right\}$. To see the reverse inclusion, 
let $R_{-i}\in \mathcal{U}^{n-1}$ be such that $\sum_{j\in N\setminus\{i\}}t(R_j)=\Omega-t(R_i)$. Then $N\in \mathcal{A}(R)$ and, by condition (iii) of the definition of agreeable rule, we have that $\mathcal{S}(R)=N$. Thus, $\varphi_i(R)=t(R_i)$ and  $t(R_i)\in O^\varphi(R_i)$. Furthermore, let $R_{-i}' \in \mathcal{U}^{n-1}$ be such that, for each $j\in N\setminus\{i\}$, we have $t(R_j')=\frac{\Omega}{n}$. By Remark \ref{EDG}, for each $j\in N\setminus\{i\}$, we have $\varphi_j(R_i,R_{-i}')=\frac{\Omega}{n}$. Then, by feasibility, $\varphi_i(R_i,R_{-i})=\frac{\Omega}{n}$ and thus $\frac{\Omega}{n}\in O^\varphi(R_i)$. This proves \eqref{prueba NOM 1}.
%\medskip

Next, let $R_i' \in \mathcal{U}$ be a manipulation of $\varphi$ at $R_i$. By \eqref{prueba NOM 1}, $\frac{\Omega}{n} \in O^\varphi(R_i')$ and there is no $x\in O^\varphi(R_i)$ such that $\frac{\Omega}{n} \mathrel{P_i} x$. Thus, $R_i'$ is not an obvious manipulation and hence $\varphi$ is \emph{non-obviously manipulable}.   
\end{proof}

Our main result, presented next, charaterizes the class of all agreeable rules. 

\begin{theorem}\label{characterization} A rule satisfies unanimity, non-obvious manipulability, own-peak-onliness, and the equal division guarantee if and only if it is an agreeable rule.
    
\end{theorem}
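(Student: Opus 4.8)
The plan is to prove both implications, the forward one being a short assembly of facts already established and the reverse one carrying the real content. For the ``if'' direction, let $\varphi=\varphi^{\mathcal{S}}$ be an agreeable rule. \emph{Non-obvious manipulability} is exactly Proposition~\ref{t1}, and the \emph{equal division guarantee} is Remark~\ref{EDG}. \emph{Own-peak-onliness} follows from condition~(ii) of an agreeable selection: if $t(R_i')=t(R_i)$ then $i\in\mathcal{S}(R)$ iff $i\in\mathcal{S}(R_i',R_{-i})$, so agent $i$ receives $t(R_i)=t(R_i')$ in both profiles or $\frac{\Omega}{n}$ in both. \emph{Unanimity} follows from condition~(iii): if $\sum_{j\in N}t(R_j)=\Omega$ then $N\in\mathcal{A}(R)$, hence $\mathcal{S}(R)=N$ and each agent receives their peak.

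For the ``only if'' direction, fix a rule $\varphi$ satisfying the four properties and define, for each $R$, the candidate coalition $\mathcal{S}(R)=\{i\in N:\varphi_i(R)=t(R_i)\}$. Everything reduces to one structural claim about option sets, which by \emph{own-peak-onliness} depend only on the peak:
\[
O^{\varphi}(R_i)=\Big\{t(R_i),\tfrac{\Omega}{n}\Big\}\ \text{ if } t(R_i)\le\Omega,\qquad O^{\varphi}(R_i)=\Big\{\tfrac{\Omega}{n}\Big\}\ \text{ if } t(R_i)>\Omega.
\]
The inclusion $\supseteq$ is the easy half: $\frac{\Omega}{n}\in O^{\varphi}(R_i)$ always, by setting every other peak to $\frac{\Omega}{n}$ and invoking the \emph{equal division guarantee} together with feasibility; and $t(R_i)\in O^{\varphi}(R_i)$ whenever $t(R_i)\le\Omega$, by choosing the remaining peaks to sum to $\Omega-t(R_i)$ and invoking \emph{unanimity}.

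The reverse inclusion is the heart of the argument and I expect it to be the main obstacle. Suppose, toward a contradiction, that some $a\in O^{\varphi}(R_i)$ satisfies $a\neq\frac{\Omega}{n}$ and $a\neq t(R_i)$ (since $a\le\Omega$, this is automatic when $t(R_i)>\Omega$). The key leverage is that any report $R_i'$ with $t(R_i')=\frac{\Omega}{n}$ has a \emph{singleton} option set $O^{\varphi}(R_i')=\{\frac{\Omega}{n}\}$, because the \emph{equal division guarantee} forces $\varphi_i(R_i',R_{-i})=\frac{\Omega}{n}$ for every $R_{-i}$. I would then pick a true preference $\widetilde{R}_i$ with peak $t(R_i)$ that ranks $a$ strictly below $\frac{\Omega}{n}$ (possible since $a$ is neither the peak nor $\frac{\Omega}{n}$). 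Using the profile that witnesses $a\in O^{\varphi}(R_i)$ together with \emph{own-peak-onliness}, truth-telling yields $a$ while $R_i'$ yields $\frac{\Omega}{n}$, a strict gain giving condition~(i) of an obvious manipulation; and since $O^{\varphi}(R_i')=\{\frac{\Omega}{n}\}$ with $\frac{\Omega}{n}\,\widetilde{P}_i\,a$ and $a\in O^{\varphi}(\widetilde{R}_i)$, condition~(ii) holds as well. Thus $R_i'$ is an obvious manipulation at $\widetilde{R}_i$, contradicting \emph{non-obvious manipulability}.

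It then remains to assemble the characterization. For any $R$ and any $i\notin\mathcal{S}(R)$ we have $\varphi_i(R)\in O^{\varphi}(R_i)\setminus\{t(R_i)\}=\{\frac{\Omega}{n}\}$, so non-members receive exactly equal division and hence $\varphi=\varphi^{\mathcal{S}}$, provided $\mathcal{S}$ is an agreeable selection. Feasibility yields $\sum_{j\in\mathcal{S}(R)}t(R_j)=\Omega-\big(n-|\mathcal{S}(R)|\big)\frac{\Omega}{n}=\frac{|\mathcal{S}(R)|}{|N|}\Omega$, which is condition~(i), i.e.\ $\mathcal{S}(R)\in\mathcal{A}(R)$; condition~(ii) is immediate from \emph{own-peak-onliness}; and condition~(iii) follows from \emph{unanimity} exactly as in the forward direction. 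This completes the plan.
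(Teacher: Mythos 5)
Your proof is correct and follows essentially the same route as the paper's: the same option-set claim $O^{\varphi}(R_i)\subseteq\{t(R_i),\frac{\Omega}{n}\}$ established via an obvious manipulation by a report with peak $\frac{\Omega}{n}$ (whose option set is a singleton by the \emph{equal division guarantee}), and the same candidate selection $\mathcal{S}(R)=\{i:\varphi_i(R)=t(R_i)\}$ verified to be agreeable by feasibility, \emph{own-peak-onliness}, and \emph{unanimity}. Your write-up is in fact slightly more explicit than the paper's in the easy inclusions and in the forward direction, but there is no substantive difference.
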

\begin{proof}
    ($\Longleftarrow$) By Proposition \ref{t1}, an agreeable rule satisfies \emph{non-obvious manipulability}. It is straightforward to see that an agreeable rule satisfies the other three properties as well.   

\noindent ($\Longrightarrow$) Assume that a rule $\varphi$ safisfies \emph{unanimity, non-obvious manipulability, own-peak-onliness}, and the \emph{equal division guarantee}. %Let $i \in N$ and $R_i \in \mathcal{U}$.

\noindent \textbf{Claim: for each  $\boldsymbol{R \in \mathcal{U}^n}$ and each $\boldsymbol{i \in N}$,  $\boldsymbol{O^\varphi(R_i)=}\begin{cases}
            \boldsymbol{\left\{t(R_i),\frac{\Omega}{n}\right\}} & \text{ if }\boldsymbol{t(P_i)\leq \Omega}\\
            \boldsymbol{\left\{\frac{\Omega}{n}\right\}} & \text{ otherwise.}
        \end{cases}$}%\textcolor{red}{CUIDADO}

%Let consider an agent $i$ and an arbitrary preference $R_i$. 
\noindent To proof the claim, it is clear by the \emph{equal division guarantee}  that $\frac{\Omega}{n}\in O^\varphi(R_i)$. Furthermore, by \emph{unanimity} and feasibility,  $t(R_i)\in O^\varphi(R_i)$ if and only if $t(R_i) \leq \Omega$.  Now, assume by contradiction that there is $x\in O^\varphi(R_i)\setminus \{\frac{\Omega}{n},t(R_i)\}.$ Then, there is $R_{-i}\in \mathcal{U}^{n-1}$ such that $\varphi_i(R_i,R_{-i})=x$. By \emph{own-peak-onliness}, we can assume that $\frac{\Omega}{n}\mathrel{P_i}x$. Let $R'_i \in \mathcal{U}$ be such that $t(R'_i)=\frac{\Omega}{n}$. By the \emph{equal division guarantee}, $\varphi_i(R'_i,R'_{-i})=\frac{\Omega}{n}$ for each $R'_{-i}\in \mathcal{U}^{n-1}$ and so $R'_i$ is an obvious manipulation of $\varphi$ at $R_i.$ This proves the Claim.

Next, define $\mathcal{S}^\star: \mathcal{U}^n \longrightarrow 2^N$ as follows. For each $R \in \mathcal{U}^n$, $\mathcal{S}^\star(R)=\{i\in N : \varphi_i(R)=t(R_i) \}$. Function $\mathcal{S}^\star$ is an agreeable selection. To see this, let $R \in \mathcal{U}^n$. By the Claim, the definition of $\mathcal{S}^\star$, and  feasibility,  we have that 

$$\sum_{i \in \mathcal{S}^\star(R)}t(R_i)=\frac{|\mathcal{S}^\star(R)|}{|N|}\Omega,$$ thus $\mathcal{S}^\star(R) \in \mathcal{A}(R)$ and condition (i) in the definition of agreeable selection is satisfied. Furthermore, the fact that $\mathcal{S}^\star(R)$ satisfies conditions (ii) and (iii) in the definition of agreeable selection follows from $\varphi$ being \emph{own-peak-only} and \emph{unanimous}, respectively.

%Continuing with the proof, by the Claim,  for each $R \in \mathcal{U}^n$ we can partition the set of all agents into two classes 
%$$S_1(R)=\{i\in N : \varphi_i(R)=t(R_i) \}$$ and 

%$$S_2(R)=\left\{i\in N : \varphi_i(R)=\frac{\Omega}{n} \right \}.$$

%Next, we show that $S_1(R)\in \mathcal{S}(R)$ for each $R\in \mathcal{U}^n$. To see this, let $R \in \mathcal{U}^n$. By feasibility and the definitions of  $S_1(R)$ and $S_2(R)$ we have that 

%$$\sum_{i \in S_1(R)}t(R_i)=\frac{|S_1(R)|}{|N|}\Omega,$$ thus condition (i) in the definition of agreeable selection is satisfied. Furthermore, the fact that $S_1(R)$ satisfies conditions (ii) and (iii) in the definition of agreeable selection follows from the fact that $\varphi$ satisfies own-peak-onliness and unanimity, respectively.

%Finally, by the fact that $\varphi$ coincides with the agreeable rule associate to $S_1(\cdot)$ follows trivially from definition of $S_1(\cdot)$.
Finally, from the definition of $\mathcal{S}^\star$ we have that $\varphi=\varphi^{\mathcal{S}^\star}$, and therefore $\varphi$ is an agreeable rule.
\end{proof}

Next, we present some examples that prove the independence of the axioms in Theorem \ref{characterization}.

\begin{itemize}
    \item Let $\widetilde{\varphi}:\mathcal{U}^n \longrightarrow X$ be such that, for each $R\in \mathcal{U}^n$ and each $i\in N$, $\varphi_i(R)=\frac{\Omega}{n}$ . Then $\widetilde{\varphi}$ satisfies all properties but \emph{unanimity}. 

\item Let $\varphi^\star:\mathcal{U}^n \longrightarrow X$ be such that, for each $R\in \mathcal{U}^n$ and each $i\in N$, 
$$\varphi^\star(R)=\left\{\begin{array}{l l }
(t(R_1), t(R_2), \ldots,t(R_n)) & \text{if } \sum_{i\in N}t(R_i)=\Omega  \\ 
\\
(\Omega, 0, \ldots 0) & \text{otherwise}\\
\end{array}\right.$$

Observe that this rule does not meet 
the \emph{equal division guarantee}. It is clear that $\varphi^\star$ is \emph{own-peak-only} and \emph{unanimous}. Now, we prove that $\varphi^\star$ also satisfies \emph{non-obvious manipulability}. Let  $i\in N$ and  $R_i\in \mathcal{U}$. Notice that $O^{\varphi^\star}(R_i)=\{\Omega,t(R_i)\}$ if $i=1$ and $O^{\varphi^\star}(R_i)=\{0,t(R_i)\}$ if $i\neq1$. Now, the fact that $\varphi^\star$ does not have obvious manipulations 
%for agents in $N\setminus\{1,2\}$ 
follows a similar argument as in Proposition \ref{t1}. Then, $\varphi^\star$ satisfies all properties but the \emph{equal division guarantee}.

\item \textbf{Uniform rule, $\boldsymbol{u}$:} For each $R\in \mathcal{U}^n,$ and each $i \in N,$
$$u_i(R)=\left\{\begin{array}{l l }
\min\{t(R_i), \lambda\} & \text{if } \sum_{j\in N}t(R_j)\geq \Omega\\
\max\{t(R_i), \lambda\} & \text{if } \sum_{j\in N}t(R_j)<\Omega\\
\end{array}\right.$$
where $\lambda \geq 0$ and solves $\sum_{j \in N}u_j(R)=\Omega.$

Observe that the uniform rule satisfies \emph{unanimity, own-peak-onliness}, and the \emph{equal division guarantee}. However, it does not satisfy \emph{non-obvious manipulability} because it is not an agreeable rule. To see this last observation consider $R \in \mathcal{U}^n$ such that $t(R_1)=0$ and $t(R_j)=\Omega$ for each $j \in N\setminus \{1\}$. Note that $u_1(R)=0$ but $\varphi_1(R)=\frac{\Omega}{n}$ for each agreeable rule $\varphi$.

\item  Let $\widehat{\varphi}:\mathcal{U}^n \longrightarrow X$ be such that, for each $R\in \mathcal{U}^n$ and each $i\in N$,  
$$\widehat{\varphi}(R)=\left\{\begin{array}{l l }
(t(R_1), t(R_2), \ldots,t(R_n)) & \text{if } \sum_{i\in N}t(R_i)=\Omega 
\\ 
(0, t(R_2), \ldots,t(R_n)) & \text{if } \sum_{i\in N\setminus\{1\}}t(R_i)=\Omega \text{ and } 0\mathrel{P_1}\frac{\Omega}{n}
\\
(\frac{\Omega}{n}, \frac{\Omega}{n}, \ldots, \frac{\Omega}{n}) & \text{otherwise}\\
\end{array}\right.$$

Observe that $\widehat{\varphi}$ is not an agreeable rule because it is not \emph{own-peak-only}. It is clear that $\widehat{\varphi}$ satisfies \emph{unanimity} and the \emph{equal division guarantee}. Now, we prove that $\widehat{\varphi}$ satisfies \emph{non-obvious manipulability} as well. For each $i\in N\setminus\{1\}$ and each $R_i\in \mathcal{U}$,  $O^{\widehat{\varphi}}(R_i)=\{\frac{\Omega}{n}, t(R_i)\}$. Then, the fact that $\widehat{\varphi}$ does not have obvious manipulations for agents in $N\setminus\{1\}$ follows the arguments used in the proof of  Proposition \ref{t1}. Let us next consider agent $1$. First, let $R_1\in \mathcal{U}$ be  such that $\frac{\Omega}{n}\mathrel{R_1}0$. Then, $O^{\widehat{\varphi}}(R_1)=\{\frac{\Omega}{n}, t(R_1)\}$.  Again, the fact that $\widehat{\varphi}$ does not have obvious manipulations for agent $1$ at $R_1$ follows the arguments used in the proof of  Proposition \ref{t1} together with the fact that  $\frac{\Omega}{n}\in O^\varphi(R'_1)$ for each $R'_1 \in \mathcal{U}$. Second, let $R_1\in \mathcal{U}$ be such that $0\mathrel{P_1}\frac{\Omega}{n}$. Then, $O^{\widehat{\varphi}}(R_1)=\{\frac{\Omega}{n}, 0,t(R_i)\}$. Furthermore, as $\frac{\Omega}{n}\in O^\varphi(R'_1)$ for each $R'_1 \in \mathcal{U}$ and $0\mathrel{P_1}\frac{\Omega}{n}$, agent $1$ does not have  obvious manipulations of $\widehat{\varphi}$ at $R_1.$

\end{itemize}
    
%%%% OJO ACA 

\subsection{Non-bossy agreeable rules}\label{subsection non-bossy}

Although agreeable rules offer an appealing and tractable description of rules satisfying \emph{unanimity, non-obvious manipulability, own-peak-onliness}, and the \emph{equal division guarantee}, they are not without shortcomings. In particular, some of them may display bossy behavior, whereby an agent can affect others’ allotments without altering their own. We therefore proceed in two steps. First, we characterize the subclass of agreeable rules that are \emph{non-bossy}, which includes as an example the egalitarian status quo rule. Second, we present a broad and natural family of \emph{non-bossy} agreeable rules that admit a simple and intuitive formulation.

%bossy rules, which are difficult to justify and describe in general. 

%A first observation is that the egalitarian status quo rule is an agreeable rule that satisfies non-bossiness.

%Two important facts are stated in this subsection. First, presentamos una caracterización sobre la agreeable selection. Second, damos a modo de ejemplo una amplica clase de agreeable rules satisfaying non-bossines.

To avoid technical details, given an  agreeable selection $S:\mathcal{U}^n \longrightarrow 2^N$ for each $R \in  \mathcal{U}^n$ we need to identify the agents in $\mathcal{S}(R)$ whose peaks are different from $\frac{\Omega}{n}$, i.e., we will work with the set     
$$\mathcal{S}^\circ(R)=\left\{i\in \mathcal{S}(R) \ : \ t(R_i)\neq\frac{\Omega}{n}\right\}.$$
Observe that $\mathcal{S}^\circ$ is sufficient to define $\varphi^{\mathcal{S}}$, so sometimes we will say that $\mathcal{S}^\circ$ is the \textbf{refined agreeable selection} associated to $\varphi^{\mathcal{S}}$.

\begin{remark}\label{peak} \em
         If an agreeable rule $\varphi^{\mathcal{S}}$ is \emph{non-bossy}, then both $\varphi^\mathcal{S}$ and its agreeable selection $\mathcal{S}^\circ$ are \emph{peaks-only}, i.e., for each pair $R,R' \in  \mathcal{U}^n$ such that $t(R_i)=t(R'_i)$ for each $i \in N$,  
\begin{equation}\label{peaks -only}
    \varphi^{\mathcal{S}}(R)=\varphi^{\mathcal{S}}(R') \text{ and } \mathcal{S}^\circ(R)=\mathcal{S}^\circ(R').
\end{equation}
\end{remark}

It is easy to see that the converse of the previous remark does not hold and thus the \emph{peaks-only} property does not imply \emph{non-bossiness}. The next result adds an extra condition to \emph{peaks-onliness} to make it equivalent to \emph{non-bossiness} for agreeable rules.

\begin{proposition} \label{prop non-bossy}
    An agreeable rule $\varphi^{\mathcal{S}}$ is non-bossy if and only if its associated agreeable selection $\mathcal{S}^\circ$ is peaks-only and is such that   
\begin{equation}\label{non-bossy2}
 \text{ for each } R \in  \mathcal{U}^n, i \in N, \text{ and } R'_i \in  \mathcal{U}, \text{ if }    i\notin \mathcal{S}^\circ(R) \cup  \mathcal{S}^\circ(R'_i,R_{-i}) \text{ then }
\mathcal{S}^\circ(R)=\mathcal{S}^\circ(R'_i,R_{-i}).
\end{equation}
\end{proposition}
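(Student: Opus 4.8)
The plan is to translate non-bossiness, which is a condition on the rule $\varphi^{\mathcal{S}}$, into an equivalent condition on the refined selection $\mathcal{S}^\circ$, and then match it against the two requirements on the right-hand side. The starting point is the observation that $\varphi^{\mathcal{S}}$ admits the clean description $\varphi_i^{\mathcal{S}}(R)=t(R_i)$ if $i\in\mathcal{S}^\circ(R)$ and $\varphi_i^{\mathcal{S}}(R)=\frac{\Omega}{n}$ otherwise; this is precisely why $\mathcal{S}^\circ$, rather than $\mathcal{S}$, is the right object, since an agent in $\mathcal{S}(R)$ whose peak equals $\frac{\Omega}{n}$ is indistinguishable, allotment-wise, from an agent outside $\mathcal{S}(R)$.

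The core step is a case analysis of a unilateral deviation. Fix $R$, $i$, and $R_i'$, and write $R'=(R_i',R_{-i})$. For any $j\neq i$ the peak $t(R_j)$ is unchanged, so $\varphi_j(R')=\varphi_j(R)$ holds exactly when the membership of $j$ in $\mathcal{S}^\circ$ is the same under $R$ and $R'$. For agent $i$ I would show that $\varphi_i(R')=\varphi_i(R)$ can occur in only two mutually exclusive ways: (A) $i\in\mathcal{S}^\circ(R)\cap\mathcal{S}^\circ(R')$ together with $t(R_i')=t(R_i)$, or (B) $i\notin\mathcal{S}^\circ(R)\cup\mathcal{S}^\circ(R')$. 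The ``mixed'' possibilities are ruled out because membership of $i$ in $\mathcal{S}^\circ$ forces the corresponding allotment to differ from $\frac{\Omega}{n}$.

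For the forward implication, peaks-onliness of $\mathcal{S}^\circ$ is immediate from Remark~\ref{peak}. To obtain \eqref{non-bossy2}, I would instantiate non-bossiness in exactly the situation of case (B): when $i\notin\mathcal{S}^\circ(R)\cup\mathcal{S}^\circ(R')$ we have $\varphi_i(R')=\varphi_i(R)=\frac{\Omega}{n}$, so non-bossiness yields $\varphi(R')=\varphi(R)$, and reading this componentwise through the description above gives $\mathcal{S}^\circ(R)=\mathcal{S}^\circ(R')$, which is precisely the desired conclusion.

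For the converse, assume $\mathcal{S}^\circ$ is peaks-only and satisfies \eqref{non-bossy2}, and suppose $\varphi_i(R')=\varphi_i(R)$. In case (A) we have $t(R_i')=t(R_i)$ while all other peaks coincide, so peaks-onliness of the rule gives $\varphi(R')=\varphi(R)$ directly. In case (B), \eqref{non-bossy2} gives $\mathcal{S}^\circ(R)=\mathcal{S}^\circ(R')$, and since all peaks of agents $j\neq i$ are unchanged while agent $i$ sits at $\frac{\Omega}{n}$ in both profiles, this again forces $\varphi(R')=\varphi(R)$. I expect the main obstacle to be purely organizational: verifying that the case split for agent $i$ is exhaustive and that the mixed cases genuinely cannot arise, which is exactly the point where the distinction between $\mathcal{S}$ and $\mathcal{S}^\circ$ must be handled with care, and where it becomes transparent that peaks-onliness accounts for case (A) while \eqref{non-bossy2} is the additional content needed to cover case (B).
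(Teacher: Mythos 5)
Your proposal is correct and follows essentially the same route as the paper's proof: both reduce non-bossiness to statements about membership in $\mathcal{S}^\circ$ via the description $\varphi_i^{\mathcal{S}}(R)=t(R_i)$ if $i\in\mathcal{S}^\circ(R)$ and $\frac{\Omega}{n}$ otherwise, invoke Remark~\ref{peak} for \emph{peaks-onliness}, use \emph{peaks-onliness} to settle the case where agent $i$ receives their peak, and use \eqref{non-bossy2} for the case where $i$ receives $\frac{\Omega}{n}$ in both profiles. The only differences are cosmetic: you organize the converse by membership of $i$ in $\mathcal{S}^\circ$ rather than by the value of $\varphi_i^{\mathcal{S}}(R)$, and you prove the forward direction directly where the paper argues by contradiction.
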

\begin{proof}
    ($\Longleftarrow$) Assume that %\ref{non-bossy} 
    $\mathcal{S}^\circ$ is \emph{peaks-only} and \eqref{non-bossy2} holds. We will prove that $\varphi^{\mathcal{S}}$ is \emph{non-bossy}. Let $R \in  \mathcal{U}^n, i \in N, \text{ and } R'_i \in  \mathcal{U}$. Assume that $\varphi^{\mathcal{S}}_i(R)=\varphi^{\mathcal{S}}_i(R'_i,R_{-i})$. There are two cases.

    \noindent \textbf{Case 1:}  $\boldsymbol{\varphi^{\mathcal{S}}_i(R)\neq\frac{\Omega}{n}}$. Then, $\varphi^{\mathcal{S}}_i(R)=t(R_i)\neq\frac{\Omega}{n}$. As $\varphi^{\mathcal{S}}_i(R)=\varphi^{\mathcal{S}}_i(R'_i,R_{-i})$, we have that $t(R'_i)=t(R_i)\neq\frac{\Omega}{n}$. Then, by \emph{peaks-onliness},
    $$\mathcal{S}^\circ(R)=\mathcal{S}^\circ(R'_i,R_{-i}),$$
    which implies that, for each $j\in N\setminus\{i\},$
    $$\varphi^{\mathcal{S}}_j(R)=\varphi^{\mathcal{S}}_j(R'_i,R_{-i}).$$

    \noindent \textbf{Case 2:} $\boldsymbol{\varphi^{\mathcal{S}}_i(R_i,R_{-i})=\frac{\Omega}{n}}$. If $t(R'_i)=t(R_i)$ the proof follows from \emph{peaks-onliness} as in the previous case, so assume that $t(R'_i)\neq t(R_i)$. There are two subcases:  
    
     \noindent \textbf{Subcase 2.a:} $\boldsymbol{t(R_i)=\frac{\Omega}{n}\neq t(R'_i)}$ (the case  $t(R'_i)=\frac{\Omega}{n}\neq t(R_i)$ is symmetric). Then, as $\frac{\Omega}{n}=\varphi^{\mathcal{S}}_i(R)=\varphi^{\mathcal{S}}_i(R'_i,R_{-i})$ by definition of $\varphi^{\mathcal{S}}$,
    $$i\notin \mathcal{S}^\circ(R'_i,R_{-i})$$ 
    Furthermore, as $t(R_i)=\frac{\Omega}{n}$, $i\notin \mathcal{S}^\circ(R)$.  
    Therefore, as \eqref{non-bossy2} holds, 
    $$\mathcal{S}^\circ(R)=\mathcal{S}^\circ(R'_i,R_{-i})$$
    which implies that 
    $$\varphi^{\mathcal{S}}_j(R_i,R_{-i})=\varphi^{\mathcal{S}}_j(R'_i,R_{-i})$$ 
    for each $j\in N\setminus\{i\}.$

    \noindent \textbf{Subcase 2.b: $\boldsymbol{t(R_i)\neq\frac{\Omega}{n}$ and $t(R'_i)\neq\frac{\Omega}{n}}$}. Then, as $\frac{\Omega}{n}=\varphi^{\mathcal{S}}_i(R)=\varphi^{\mathcal{S}}_i(R'_i,R_{-i})$ by definition of $\varphi^{\mathcal{S}}$,
    $$i\notin \mathcal{S}^\circ(R) \text{ and } i\notin \mathcal{S}^\circ(R'_i,R_{-i})$$ 
    Therefore, as \eqref{non-bossy2} holds, 
    $$\mathcal{S}^\circ(R)=\mathcal{S}^\circ(R'_i,R_{-i})$$
    which implies that  
    $$\varphi^{\mathcal{S}}_j(R_i,R_{-i})=\varphi^{\mathcal{S}}_j(R'_i,R_{-i})$$ 
    for each $j\in N\setminus\{i\}.$
    
     \noindent ($\Longrightarrow$) Assume that $\varphi^{\mathcal{S}}$ is \emph{non-bossy}. By Remark \ref{peak}, $\mathcal{S}^\circ$ is \emph{peaks-only}. Suppose  by contradiction that \eqref{non-bossy2} does not hold. Then,  there are $R \in  \mathcal{U}^n$, $i \in N$, and $R'_i \in  \mathcal{U}$ such that
    \begin{equation*}
    i\notin \mathcal{S}^\circ(R) \cup \mathcal{S}^\circ(R'_i,R_{-i}) \text{ and } \mathcal{S}^\circ(R)\neq \mathcal{S}^\circ(R'_i,R_{-i}).
\end{equation*}
    Therefore, by definition of $\varphi^{\mathcal{S}}$, 
    \begin{equation}\label{UNO}
    \varphi^{\mathcal{S}}_i(R)=\frac{\Omega}{n}=\varphi^{\mathcal{S}}_i(R'_i,R_{-i}).    
    \end{equation}
     Furthermore, as $\mathcal{S}^\circ(R)\neq \mathcal{S}^\circ(R'_i,R_{-i})$, it follows that  $\mathcal{S}^\circ(R)\setminus \mathcal{S}^\circ(R'_i,R_{-i})\neq\emptyset$ or $\mathcal{S}^\circ(R'_i,R_{-i})\setminus \mathcal{S}^\circ(R) \neq\emptyset$. Assume $\mathcal{S}^\circ(R)\setminus \mathcal{S}^\circ(R'_i,R_{-i})\neq\emptyset$ (the other case is symmetric). Let $j\in \mathcal{S}^\circ(R)\setminus \mathcal{S}^\circ(R'_i,R_{-i})$. By definition of $\varphi^{\mathcal{S}}$ and $\mathcal{S}^\circ$ we have that 
     \begin{equation}\label{DOS}
         \varphi^{\mathcal{S}}_j(R_i,R_{-i})=t(R_j)\neq \frac{\Omega}{n}=\varphi^{\mathcal{S}}_j(R'_i,R_{-i})
     \end{equation}
     Thus, by \eqref{UNO} and \eqref{DOS}, we obtain  contradiction to \emph{non-bossiness}. Hence, \eqref{non-bossy2} holds.
\end{proof}

Next, we describe a wide and natural family of non-bossy, agreeable rules with a straightforward and intuitive formulation.
Let $\mathcal{C}=\{C_i\}_{i=1}^T$ be a \textbf{collection of nested coalitions} such that $$N=C_1 \supsetneq C_2 \supsetneq \ldots \supsetneq C_T=\emptyset.$$ 
Given one such collection $\mathcal{C}=\{C_i\}_{i=1}^T$, define its associated agreeable  selection $\mathcal{S}^\mathcal{C}$ as follows. For each $R \in \mathcal{U}^n$, $$\mathcal{S}^\mathcal{C}(R)=C_t \text{ whenever }C_t \in \mathcal{A}(R) \text{ and }C_{t'} \notin \mathcal{A}(R) \text{ for each }t'<t.$$ %Notice that such a definition is well-defined and indeed creates an agreeable selection. 

For example, the egalitarian status quo rule is induced by the collection of nested coalitions $\mathcal{C}=\{C_1=N, C_2=\emptyset\}$.

%Now, let $\mathcal{A}=(S_t)_{t=1,\cdots, T}$ be a finite colection of decreasing coalitions of $N$ starting with $N$ and finishing with $\emptyset$ . It is the elements $S_t$ satisfy that: 
%$$S_1=N \text{ and } S_{t+1}\subset S_{t} \text{ for all } t=1,\cdots T-1 \text{ and } S_T=\emptyset$$.

%Give $\mathcal{A}=(S_t)_{t=1,\cdots, T}$ define and agreeable selectin rule $S^\mathcal{A}:\mathcal{U}^n \longrightarrow 2^N$ such that for all $R\in \mathcal{U}^n$
%$$S^{\mathcal{A}}(R)=S_t$$
%where $S_t\in \mathcal{A}\cap \mathcal{A}(R)$ and $S_{t'}\notin \mathcal{A}\cap \mathcal{A}(R)$ for all $t'<t.$ 

%HACER NOTAR QUE ESTA BIEN DEFINIDA Y QUE EFECTIVAMENTE EN UNA agreeable selection.

%Let $\varphi^\mathcal{A}$ the agreeable selection associated to $S^\mathcal{A}$.

%DECIR QUE LA egalitarian status quo es de esta clase donde $\mathcal{A}=(S_t)_{t=1,2}$ with $S_1=N$ and $S_2=\emptyset$. 
\begin{proposition}\label{non-bossy agreeable example}
    %Every $\varphi^\mathcal{A}$ is a non-bossy agreeable rule.
Any agreeable rule induced by a collection of nested coalitions is non-bossy.
\end{proposition}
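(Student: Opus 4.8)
The plan is to reduce everything to the characterization in Proposition \ref{prop non-bossy}. Write $\mathcal{S}=\mathcal{S}^{\mathcal{C}}$ for the agreeable selection induced by the nested collection $\mathcal{C}=\{C_t\}_{t=1}^T$. By that proposition, it suffices to establish two facts: that the refined selection $\mathcal{S}^\circ$ is \emph{peaks-only}, and that it satisfies condition \eqref{non-bossy2}.

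The peaks-only part is immediate and I would dispatch it first. Whether a coalition $C\subseteq N$ is agreeable for $R$ depends only on the peak vector, since the defining equality $\sum_{j\in C}t(R_j)=\frac{|C|}{|N|}\Omega$ involves peaks alone. Hence the selected index (the least $t$ with $C_t\in\mathcal{A}(R)$), the coalition $\mathcal{S}(R)=C_t$, and therefore $\mathcal{S}^\circ(R)=\{j\in C_t:t(R_j)\neq\frac{\Omega}{n}\}$, all depend only on the peaks. So profiles with identical peaks yield identical refined selections, which is exactly \eqref{peaks -only}.

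The substance lies in verifying \eqref{non-bossy2}. Fix $R$, an agent $i$, and $R'_i$, write $R'=(R'_i,R_{-i})$, and assume $i\notin\mathcal{S}^\circ(R)\cup\mathcal{S}^\circ(R')$. I would exploit the chain through the threshold $k_i$, defined as the largest index with $i\in C_{k_i}$ (well-defined since $i\in C_1=N$ and $i\notin C_T=\emptyset$); by nestedness, $i\in C_t$ precisely when $t\leq k_i$. Coalitions $C_t$ with $t>k_i$ omit $i$, so their agreeability is the same under $R$ and $R'$; only $C_1\supsetneq\cdots\supsetneq C_{k_i}$ can change agreeability when $i$'s peak moves. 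Let $t,t'$ be the selected indices for $R,R'$, and let $m$ be the least index exceeding $k_i$ with $C_m\in\mathcal{A}(R)$ (equivalently $C_m\in\mathcal{A}(R')$). The argument is a case analysis on why $i$ is excluded from each refined selection: either the selected coalition omits $i$ (the selected index exceeds $k_i$, which forces it to equal $m$), or the relevant peak equals $\frac{\Omega}{n}$. When both exclusions are of the ``omits $i$'' type, both selections collapse to $C_m$ and \eqref{non-bossy2} holds trivially; the work is in the mixed case, where (say) $t(R_i)=\frac{\Omega}{n}$ while the selection under $R'$ omits $i$.

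This mixed case is exactly where I expect the main obstacle to sit. Nestedness guarantees that $\mathcal{S}(R)$ and $\mathcal{S}(R')$ are comparable, but \emph{non-bossiness} demands more: every agent in their symmetric difference must have peak $\frac{\Omega}{n}$, for otherwise such an agent would appear in $\mathcal{S}^\circ$ and a third party's allotment would move. So the delicate step is to certify, using the arithmetic $\sum_{j\in C}t(R_j)=\frac{|C|}{n}\Omega$ together with the structure of the chain, that when $i$ drops out of (or into) the selected coalition, the coalition the chain ``lands on'' differs from the original only by agents whose peaks are already $\frac{\Omega}{n}$—so that the two refined selections coincide. Controlling this landing precisely, rather than merely bounding the selected index, is the crux on which the whole proof turns.
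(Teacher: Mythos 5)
Your reduction to Proposition \ref{prop non-bossy} and the peaks-only verification are correct, and your threshold/chain setup rightly isolates the one problematic situation, namely the ``mixed case'': $t(R_i)=\frac{\Omega}{n}$, agent $i$ belongs to the selected coalition $\mathcal{S}^{\mathcal{C}}(R)$, and after the deviation the chain lands on a coalition not containing $i$. But there your proposal stops: you say one must ``certify'' that the landing coalition differs from $\mathcal{S}^{\mathcal{C}}(R)$ only by agents whose peaks are already $\frac{\Omega}{n}$, and you give no argument. None exists. The agreeability arithmetic only forces the peaks of the agents in the difference to \emph{sum} to the right multiple of $\frac{\Omega}{n}$; it does not force each of them to equal $\frac{\Omega}{n}$. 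This is not a technical step to be completed but the exact point at which the proposition is false. Take $N=\{1,2,3\}$, $\Omega=3$ (so $\frac{\Omega}{n}=1$) and $\mathcal{C}=\{C_1=N,\,C_2=\emptyset\}$, whose induced rule is the egalitarian status quo rule. Let $R$ have peaks $t(R_1)=1$, $t(R_2)=\tfrac{1}{2}$, $t(R_3)=\tfrac{3}{2}$: since the peaks sum to $\Omega$, we get $\mathcal{S}^{\mathcal{C}}(R)=N$ and $\varphi^{\mathcal{S}}(R)=\left(1,\tfrac{1}{2},\tfrac{3}{2}\right)$. Now let $t(R_1')=2$: no nonempty coalition in the chain is agreeable for $(R_1',R_{-1})$, so $\mathcal{S}^{\mathcal{C}}(R_1',R_{-1})=\emptyset$ and $\varphi^{\mathcal{S}}(R_1',R_{-1})=(1,1,1)$. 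Agent $1$'s allotment is unchanged (it equals $\frac{\Omega}{n}$ both times, precisely because $t(R_1)=\frac{\Omega}{n}$), yet the allotments of agents $2$ and $3$ move: the rule is \emph{bossy}, and condition \eqref{non-bossy2} fails with $\mathcal{S}^\circ(R)=\{2,3\}$, $\mathcal{S}^\circ(R_1',R_{-1})=\emptyset$, $1\notin\mathcal{S}^\circ(R)\cup\mathcal{S}^\circ(R_1',R_{-1})$.

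It is worth noting that the paper's own proof founders on the same case, only silently: it writes $\mathcal{S}^\circ(R)=C_t$ and infers ``$i\notin C_t$'' from $i\notin\mathcal{S}^\circ(R)$, an inference that is invalid exactly when $i\in C_t$ but $t(R_i)=\frac{\Omega}{n}$. What that argument actually establishes is the weaker claim that $i\notin \mathcal{S}^{\mathcal{C}}(R)\cup\mathcal{S}^{\mathcal{C}}(R_i',R_{-i})$ implies $\mathcal{S}^{\mathcal{C}}(R)=\mathcal{S}^{\mathcal{C}}(R_i',R_{-i})$ (your ``both exclusions omit $i$'' case, which you handle correctly), and this does not yield \eqref{non-bossy2} for the refined selection, hence does not yield \emph{non-bossiness}. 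So your instinct about where the crux sits is better than the paper's proof, and you have in effect located a genuine error in the result; but as a proof your proposal is incomplete, and the plan you outline cannot be carried out, because the statement to be certified in the mixed case is false.
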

\begin{proof}
    Let $\mathcal{C}$ be a collection of nested coalitions and let $\varphi$ be the agreeable rule induced by $\mathcal{S}^\mathcal{C}$. It is clear that $\varphi$ is \emph{peaks-only}. We will prove that the refined agreeable selection   $S^\circ$ associated with $\varphi$ satisfies condition \eqref{non-bossy2} of Proposition \ref{prop non-bossy}. %For simplicity we refer to $S^\mathcal{A}$ as $S$.

Let $R \in  \mathcal{U}^n$, $i \in N$ and $R'_i \in  \mathcal{U}$. %Let $R'=(R_i',R_{-i})$
Assume that $i\notin \mathcal{S}^\circ(R)$ and $\mathcal{S}^\circ(R)\neq \mathcal{S}^\circ(R'_i, R_{-i})$. Suppose by contradiction that $i\notin \mathcal{S}^\circ(R'_i, R_{-i})$. 
%Suppose that 
Let $\mathcal{S}^\circ(R)=C_t$ and  %\cap\{j\in N:t(R_j)\neq\frac{\Omega}{n}\}$ and 
$\mathcal{S}^\circ(R'_i, R_{-i})=C_{t'}$. %\cap\{j\in N:t(R'_j)\neq\frac{\Omega}{n}\}$. 
As $\mathcal{S}^\circ(R)\neq \mathcal{S}^\circ(R')$, we can assume w.l.o.g that there is $j\in \mathcal{S}^\circ(R)\setminus \mathcal{S}^\circ(R'_i, R_{-i})$. 
Then $j\neq i$ and $j \in C_t$. % and so $t(R_j)=t(R'_j)\neq\frac{\Omega}{n}$. 
Therefore, $j \notin C_{t'}$ and so $t<t'$, implying $C_{t'}\subsetneq C_t$. However, as $i\notin C_t$ and $C_t\in \mathcal{A}(R)$, we have that
$C_t\in\mathcal{A}(R'_i, R_{-i})$, and we get a contradiction to the fact that  $\mathcal{S}^\circ(R'_i, R_{-i})=C_{t'}$ and $t<t'$.
\end{proof}

\section{Further positive results}\label{further results}

In some situations,  it is more appropriate to assume that instead of a social endowment $\Omega \in \mathbb{R}_+$, there is a profile  $\omega=(\omega_i)_{i \in N} \in \mathbb{R}_+^n$  where, for each $i \in N$, $\omega_i$ denotes agent $i$'s individual endowment of the non-disposable commodity. % and  $\sum_{j \in N}\omega_j=\Omega$. 
An economy in this context consists of a profile of preferences and a profile of endowments $(R,\omega)$. Let $\mathcal{E}$ denote the domain of all economies with preferences in $\mathcal{U}$.    
In such setting, it is natural to replace the equal division guarantee with the following property.  

\vspace{5 pt}
\noindent
\textbf{Endowments guarantee:} For each  $(R,\omega) \in \mathcal{E}$ and each $i \in N$ such that $t(R_i)=\omega_i$, we have $\varphi_i(R)=\omega_i$.
\vspace{5 pt}

\noindent We can define a \textbf{$\boldsymbol{\omega}$-agreeable selection} as a function $\mathcal{S}^\omega: \mathcal{E} \longrightarrow X$ that, for each $(R,\omega) \in \mathcal{E}$, selects a coalition $\mathcal{S}^\omega(R,\omega)$ in the set
$$\left\{S\subseteq N:\sum_{j\in S}t(R_j)=\sum_{j\in S}\omega_j\right\},$$
and its corresponding \textbf{agreeable reallocation rule} as one that assigns (i) to each agent in $S^\omega(R, \omega)$ their peak amount, and (ii) to each remaining agent their endowment.  Then, for economies with individual endowments, the following variant of Theorem \ref{characterization} is obtained.

\begin{proposition} \label{characterization bis}
 A rule satisfies unanimity, non-obvious manipulability, own-peak-onliness, and the endowments guarantee if and only if it is an agreeable reallocation rule. 
\end{proposition}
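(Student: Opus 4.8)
The plan is to mirror the proof of Theorem~\ref{characterization}, replacing the common value $\frac{\Omega}{n}$ throughout by the agent-specific endowment $\omega_i$ and the \emph{equal division guarantee} by the \emph{endowments guarantee}; the social endowment is now $\Omega=\sum_{j\in N}\omega_j$. For the easy direction ($\Longleftarrow$), I would check that every agreeable reallocation rule satisfies the four properties. Unanimity, own-peak-onliness, and the endowments guarantee are immediate from the two-step definition. For non-obvious manipulability I would adapt Proposition~\ref{t1}: first compute that $O^\varphi(R_i)=\{t(R_i),\omega_i\}$ when $t(R_i)\leq\Omega$ and $O^\varphi(R_i)=\{\omega_i\}$ otherwise. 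Attainability of $\omega_i$ comes from letting every other agent report a peak at their own endowment, so feasibility and the endowments guarantee force $\varphi_i=\Omega-\sum_{j\neq i}\omega_j=\omega_i$; attainability of $t(R_i)$ comes from letting the others' peaks sum to $\Omega-t(R_i)$, so unanimity applies. Since $\omega_i\in O^\varphi(R_i')$ for every report $R_i'$ while also $\omega_i\in O^\varphi(R_i)$, condition (ii) in the definition of obvious manipulation can never be met, so no manipulation is obvious.

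For the characterization direction ($\Longrightarrow$), the core step — and the one I expect to be the main obstacle — is establishing the same option-set formula directly from the four axioms. I would argue: the endowments guarantee gives $\omega_i\in O^\varphi(R_i)$ (take the others at their endowment peaks and use feasibility), while unanimity and feasibility give $t(R_i)\in O^\varphi(R_i)$ exactly when $t(R_i)\leq\Omega$. To rule out a third value, I would suppose $x\in O^\varphi(R_i)\setminus\{\omega_i,t(R_i)\}$, use own-peak-onliness to replace $R_i$ by a preference with the same peak for which $\omega_i\mathrel{P_i}x$, and then let the agent report a preference $R_i'$ with peak $\omega_i$. The endowments guarantee forces $O^\varphi(R_i')=\{\omega_i\}$, so the single attainable outcome $\omega_i$ strictly beats $x\in O^\varphi(R_i)$, exhibiting an obvious manipulation — a contradiction.

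Once the option set is pinned down, I would finish exactly as in Theorem~\ref{characterization}: define $\mathcal{S}^\star(R,\omega)=\{i\in N:\varphi_i(R,\omega)=t(R_i)\}$ and check that it is a $\omega$-agreeable selection. The agreeability condition $\sum_{j\in\mathcal{S}^\star}t(R_j)=\sum_{j\in\mathcal{S}^\star}\omega_j$ follows by summing $\varphi$ over $N$: members of $\mathcal{S}^\star$ contribute $t(R_j)$, non-members contribute $\omega_j$, and feasibility $\sum_j\varphi_j=\Omega=\sum_j\omega_j$ yields the equality. The own-peak-only and unanimity clauses of the selection follow from the corresponding axioms, and by construction $\varphi=\varphi^{\mathcal{S}^\star}$.

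The only genuinely new wrinkle relative to Theorem~\ref{characterization} is that $\omega_i$ now varies across agents, so the ``safe'' fallback outcome is no longer a single common value. However, since the endowments guarantee is applied agent-by-agent and feasibility still reads $\sum_j\omega_j=\Omega$, each construction above (the all-endowment-peaks profile, the peaks summing to $\Omega-t(R_i)$, and the report with peak $\omega_i$) remains available, so the entire argument transfers with only cosmetic changes.
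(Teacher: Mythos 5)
Your proposal is correct and is precisely the argument the paper intends: Proposition~\ref{characterization bis} is stated without a separate proof, presented as the ``variant'' of Theorem~\ref{characterization} obtained by substituting $\omega_i$ for $\frac{\Omega}{n}$ and the \emph{endowments guarantee} for the \emph{equal division guarantee}, which is exactly what you carry out (option-set claim via the all-endowment-peaks profile, elimination of third values via a report with peak $\omega_i$, and the $\mathcal{S}^\star$ construction). Your adaptation of Proposition~\ref{t1} for the easy direction likewise matches the intended reasoning, so there is nothing to correct.
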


%The analysis of the independence of axioms involved is similar to that of Theorem \ref{characterization}.

Going back to the model with a social endowment, we now present a requirement stating that if the peak of one agent is greater than or equal to the peak of another agent, then the first agent must be assigned an amount greater than or equal to that of the second agent \citep{bochet2024preference}. 

%the requirement that assignments be ordered in the same ordering of peaks, given the profile:

\vspace{5 pt}
\noindent
\textbf{Peak order preservation:} For each  $R \in \mathcal{U}^n$ and each $\{i,j\} \subseteq N$ with $i\neq j$, $t(R_i) \leq t(R_j)$ implies  $\varphi_i(R)\leq \varphi_j(R).$
\vspace{5 pt}

%\noindent It is clear that peak responsiveness implies symmetry.

%Furthermore, 

%We next prove that \emph{peak responsiveness} together with \emph{non-obvious manipulability} imply the \emph{equal division guarantee} for an \emph{own-peak-only} rule.

%\begin{remark} \em
%\label{PMEDG}
%A rule that satisfies own-peak-onliness, peak responsiveness, and NOM meets the equal division guarantee.

%To see that let $\varphi$ be an own-peak-only rule that satisfies peak responsiveness and NOM. Assume, on contradiction, that there are $R \in \mathcal{U}$ and  $i\in N$  such that $t(R_i)=\frac{\Omega}{n}$ and $\varphi_i(R)\neq\frac{\Omega}{n}$. Consider the case  $\varphi_i(R)<\frac{\Omega}{n}$ (the other one is symmetric). Let $R_i'\in \mathcal{U}$ be \textcolor{red}{such that $p(R'_i)=\infty$}. By peak responsiveness, for each $\widetilde{R}_{-i} \in \mathcal{U}^{n-1}$ we have that $\varphi_i(R'_i,\widetilde{R}_{-i})\geq \varphi_j(R'_i,\widetilde{R}_{-i}, )$ for each $j\in N\setminus\{i\}$. Then, by feasibility, $\varphi_i(R'_i,R_{-i})\geq \frac{\Omega}{n}$. By own-peak-onliness, we can assume that $R_i$ is such that   $\varphi_i(R'_i,R_{-i})P_i\varphi_i(R)$. Then, $R'_i$ is a manipulation of $\varphi$ at $(R_i)$. Furthermore, if $x\in O^\varphi(R'_i)$, then  $x=\varphi_i(R'_i,R'_{-i})$ for some $R'_{-i} \in \mathcal{U}^{n-1}$. Then, by peak responsiveness and feasibility, $x\geq \frac{\Omega}{n}$. Therefore, by own-peak-onliness, we can assume that $xR_i\varphi_i(R)$. Therefore, $R'_i$ is an obvious manipulation of $\varphi$ at $R$, which is a contradiction.
%\end{remark} 

\begin{lemma}
    A rule that satisfies non-obvious manipulability, own-peak-onliness, and peak order preservation meets the equal division guarantee.
\end{lemma}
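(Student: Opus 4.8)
The plan is to fix an agent $i$, use \emph{own-peak-onliness} to treat $O^\varphi(R_i)$ as a function of the peak alone, and write $O(p)$ for the option set of any preference with peak $p$. Proving the \emph{equal division guarantee} amounts to showing $O(\Omega/n)=\{\Omega/n\}$, and I would obtain this by ruling out attainable allotments strictly above $\Omega/n$ and strictly below $\Omega/n$ separately (so that no profile with $t(R_i)=\Omega/n$ can give $i$ anything other than $\Omega/n$). Throughout, I would exploit that on $\mathcal{U}$ one may prescribe the whole ranking below the peak freely; in particular, for any $p$ and any $z\neq p$ there is a preference with peak $p$ whose unique worst point on $[0,\Omega]$ is $z$. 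This yields a clean ``worst-case'' scheme: if $z$ is the worst truthful outcome and the agent has a misreport whose entire option set avoids $z$, then every outcome of the misreport beats $z$, so conditions (i)--(ii) of obvious manipulability hold and the misreport is an obvious manipulation.

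The first anchor is easy. If $i$ reports peak $0$, every other agent has a weakly larger peak, hence (by \emph{peak order preservation}) a weakly larger allotment, so feasibility forces $\varphi_i\le \Omega/n$; thus $O(0)\subseteq[0,\Omega/n]$. Feeding $O(0)$ into the worst-case scheme gives a single-crossing restriction on all option sets: if $z\in O(q)$ with $z>\Omega/n$ and $z\neq q$, take a true preference with peak $q$ and worst point $z$ and let the agent misreport a peak-$0$ preference; since $z>\Omega/n$ lies outside $O(0)$, this is an obvious manipulation, a contradiction. Hence the only allotment above $\Omega/n$ that a peak-$q$ preference can attain is $q$ itself. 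Taking $q=\Omega/n$ gives $O(\Omega/n)\subseteq[0,\Omega/n]$ (the upper half), and taking any $q>\Omega$ gives $O(q)\subseteq[0,\Omega/n]$, since the exceptional value $q$ is now an infeasible allotment.

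The crux is the lower half, for which I would establish a \emph{dual} anchor: $O(q)=\{\Omega/n\}$ for every $q>\Omega$. Fix a profile in which $i$ has peak $q>\Omega$ and receives $x$. Partition the other agents by peak: those with peak at most $q$ receive at most $x$ by \emph{peak order preservation}, while those with peak above $q$ (hence above $\Omega$) receive at least $x$ but, by the previous paragraph applied to them, at most $\Omega/n$. Writing $A$ and $B=n-1-A$ for the sizes of the two groups, feasibility gives $\Omega=x+\sum_{j\neq i}\varphi_j\le (1+A)x+B\,\Omega/n$; if $x<\Omega/n$ the right-hand side is strictly below $(1+A+B)\Omega/n=\Omega$, a contradiction, so $x\ge\Omega/n$. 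Combined with $O(q)\subseteq[0,\Omega/n]$, this forces $x=\Omega/n$: reporting any peak exceeding the endowment guarantees exactly $\Omega/n$.

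With the dual anchor in hand the lower half is immediate: if some $x<\Omega/n$ belonged to $O(\Omega/n)$, take a true preference with peak $\Omega/n$ and worst point $x$ and let the agent misreport a peak $q>\Omega$; since $O(q)=\{\Omega/n\}$ avoids $x$, the worst-case scheme makes this an obvious manipulation, a contradiction. Hence $O(\Omega/n)=\{\Omega/n\}$, which is exactly the \emph{equal division guarantee}. I expect the main obstacle to be precisely this lower half: \emph{peak order preservation} bounds option sets from above (via the peak-$0$ anchor) but not from below, and the asymmetry is resolved only by the observation that an unattainable peak $q>\Omega$ cannot be self-served, which together with feasibility pins $O(q)$ down to $\{\Omega/n\}$. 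One should also note that the argument stops short of collapsing the rule to equal division, since for peaks $q\in(\Omega/n,\Omega]$ the value $q$ may genuinely remain in $O(q)$ (as in the egalitarian status quo rule), so the statement proved is the guarantee itself and nothing stronger.
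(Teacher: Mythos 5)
Your proof is correct, but it takes a genuinely different route from the paper's. The paper argues by direct contradiction: if $t(R_i)=\frac{\Omega}{n}$ and $\varphi_i(R)<\frac{\Omega}{n}$, it has agent $i$ misreport a preference with ``$p(R'_i)=\infty$''; \emph{peak order preservation} then makes $i$'s allotment maximal against \emph{every} opposing profile, so feasibility bounds the whole option set of the misreport below by $\frac{\Omega}{n}$, and choosing the true preference (same peak, via \emph{own-peak-onliness}) to rank everything at or above $\frac{\Omega}{n}$ strictly over $\varphi_i(R)$ yields an obvious manipulation; the case $\varphi_i(R)>\frac{\Omega}{n}$ is dispatched symmetrically (in effect, your peak-$0$ anchor). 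You replace the infinite-peak misreport with a two-stage construction: the peak-$0$ anchor first gives $O(q)\cap\left(\frac{\Omega}{n},\Omega\right]\subseteq\{q\}$ for every $q$, whence $O(q)\subseteq\left[0,\frac{\Omega}{n}\right]$ for any finite $q>\Omega$ (the exceptional value $q$ being infeasible), and the partition/feasibility count $\Omega\le(1+A)x+B\,\frac{\Omega}{n}$ then pins $O(q)=\left\{\frac{\Omega}{n}\right\}$; only afterwards do you deploy the worst-point misreport. What your route buys is rigor within the stated domain: a preference with ``peak $\infty$'' has no global maximum on $\mathbb{R}_+$ and so lies outside $\mathcal{U}$, and since opposing peaks are unbounded no finite peak can dominate them all, so the paper's step is at best an abuse of notation that your dual anchor repairs; you also extract the extra structural fact $O(q)=\left\{\frac{\Omega}{n}\right\}$ for $q>\Omega$, in the spirit of the option-set Claim in the paper's Theorem 4. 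What the paper's route buys is brevity: one extreme misreport per case and no counting argument. One compressed step in your write-up deserves the explicit note that it is sound: condition (i) of obvious manipulability does follow from your worst-case scheme, because at the profile $R_{-i}$ realizing the truthful worst outcome $z$, the misreport's outcome lies in its option set, hence avoids $z$ and strictly beats it.
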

\begin{proof}
    Let $\varphi$ be a non-obviously manipulable, own-peak-only, and peak responsive rule. Assume, by contradiction, that there are $R \in \mathcal{U}^n$ and  $i\in N$  such that $t(R_i)=\frac{\Omega}{n}$ and $\varphi_i(R)\neq\frac{\Omega}{n}$. Consider the case  $\varphi_i(R)<\frac{\Omega}{n}$ (the other one is symmetric). Let $R_i'\in \mathcal{U}$ be such that $p(R'_i)=\infty$. By \emph{peak order preservation}, for each $\widetilde{R}_{-i} \in \mathcal{U}^{n-1}$ we have that $\varphi_i(R'_i,\widetilde{R}_{-i})\geq \varphi_j(R'_i,\widetilde{R}_{-i})$ for each $j\in N\setminus\{i\}$. Then, by feasibility, $\varphi_i(R'_i,R_{-i})\geq \frac{\Omega}{n}$. By \emph{own-peak-onliness}, we can assume that $R_i$ is such that   $\varphi_i(R'_i,R_{-i}) \ P_i \ \varphi_i(R)$. Then, $R'_i$ is a manipulation of $\varphi$ at $R_i$. Furthermore, if $x\in O^\varphi(R'_i)$, then  $x=\varphi_i(R'_i,R'_{-i})$ for some $R'_{-i} \in \mathcal{U}^{n-1}$. Then, by \emph{peak order preservation} and feasibility, $x\geq \frac{\Omega}{n}$. Therefore, by \emph{own-peak-onliness}, we can assume that $x\mathrel{R_i}\varphi_i(R)$, which implies that  $R'_i$ is an obvious manipulation of $\varphi$ at $R_i$, which is a contradiction.
\end{proof}

In light of the previous lemma, we following result can easily be derived.

%Hence, under own-peak-onliness and NOM, peak responsiveness implies the fairness property of the equal division guarantee.

%\footnote{Remember that, as we saw in Section \ref{simple},  symmetry and the equal division guarantee are independent properties.} By using Lemma \ref{PMEDG} and a proof similar to that of Theorem \ref{characterization}, the following characterization is obtained.

\begin{proposition} \label{characterization3}
 A rule %defined in  $\mathcal{U}$ 
 satisfies unanimity, non-obvious manipulability,  own-peak-onliness, and peak order preservation if and only if it is a peak order preserving and agreeable rule.\footnote{Notice that the \emph{equal division guarantee} is not invoked in this proposition.} 
\end{proposition}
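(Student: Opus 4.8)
The plan is to prove the characterization in Proposition~\ref{characterization3} by combining the previous Lemma with Theorem~\ref{characterization}. Recall that the Lemma shows that \emph{non-obvious manipulability}, \emph{own-peak-onliness}, and \emph{peak order preservation} together imply the \emph{equal division guarantee}. Hence any rule satisfying the four properties in the statement (\emph{unanimity}, \emph{non-obvious manipulability}, \emph{own-peak-onliness}, and \emph{peak order preservation}) automatically satisfies the \emph{equal division guarantee} as well.

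For the direct implication ($\Longrightarrow$), suppose $\varphi$ satisfies \emph{unanimity}, \emph{non-obvious manipulability}, \emph{own-peak-onliness}, and \emph{peak order preservation}. By the Lemma, $\varphi$ also satisfies the \emph{equal division guarantee}. Thus $\varphi$ meets all four hypotheses of Theorem~\ref{characterization}, so $\varphi$ is an agreeable rule. Since by assumption $\varphi$ also satisfies \emph{peak order preservation}, it is a peak order preserving and agreeable rule, as required. This is the easy direction and follows immediately once the Lemma is in hand.

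For the converse ($\Longleftarrow$), suppose $\varphi$ is a peak order preserving and agreeable rule. As an agreeable rule it satisfies \emph{non-obvious manipulability} (by Proposition~\ref{t1}), and it is immediate from the definition of an agreeable rule (via its agreeable selection) that it satisfies \emph{own-peak-onliness} and \emph{unanimity}; the latter follows from condition (iii) in the definition of an agreeable selection, which forces $\mathcal{S}(R)=N$ whenever $\sum_{i\in N}t(R_i)=\Omega$. By hypothesis it also satisfies \emph{peak order preservation}. Hence all four properties hold, completing the equivalence.

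I expect the only potential subtlety to be making sure that the hypotheses of the Lemma and of Theorem~\ref{characterization} line up exactly: the Lemma is what allows us to \emph{drop} the \emph{equal division guarantee} from the list of axioms, replacing it with \emph{peak order preservation} (as the footnote to the proposition emphasizes). Once that logical bridge is established, the proof is essentially a citation of the two earlier results, and there is no hard computational step. The main thing to verify carefully is that an agreeable rule indeed satisfies \emph{unanimity} and \emph{own-peak-onliness} directly from the structure of its agreeable selection, but this is routine and was already noted in the proof of Theorem~\ref{characterization}.
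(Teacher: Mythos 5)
Your proof is correct and follows exactly the route the paper intends: the paper itself derives Proposition~\ref{characterization3} by noting that the preceding Lemma supplies the \emph{equal division guarantee}, after which Theorem~\ref{characterization} applies, with the converse being the easy direction of that theorem plus the assumed \emph{peak order preservation}. There is nothing missing in your argument.
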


\begin{remark}  \em In order to construct a peak order preserving and agreeable rule, it is sufficient to consider a peak order preserving agreeable selection, i.e.,  an agreeable selection $\mathcal{S}$ %is \textbf{peak responsive} if for
such that for each profile $R$ and each $\{i,j\} \subset N$ with $i \neq j$ and $t(R_i) \leq t(R_j)$, the fact that $i\in \mathcal{S}(R)$ implies that $j\in \mathcal{S}(R)$. In particular, the egalitarian status quo rule is a \emph{peak order preserving} agreeable rule.  
\end{remark}

\begin{table}[ht] 
\small
\centering 
\begin{threeparttable}
\begin{tabular}{|l|c|c|c|c||c|c|c|c|}
\hline
& \multicolumn{4}{c||}{Impossibilities} & \multicolumn{4}{c|}{Possibilities}\\
\cline{2-9} 
  & Th. \ref{Impossibility: eff sp and edlb}  &  Fact \ref{fact} & Th. \ref{Impossibility: eff NOM and OPO} & Th. \ref{Impossibility: eff OPO and non-bossy} & Th. \ref{characterization} & Prop. \ref{prop non-bossy} & Prop. \ref{characterization bis} & Prop. \ref{characterization3}
  \\
 \hline \hline
Efficiency & $+$ & $+$ & $+$ & $+$ & & & &\\
\hline 
Unanimity &  &  &  &  & $+$ & $+$ & $+$ & $+$ \\
\hline 
Strategy-proofness & $+$ & &  & & & & &\\
\hline 
Non-obvious manipulability & & $+$& $+$ &  & $+$ & $+$ & $+$ & $+$\\
\hline 
Peaks-onliness &  &  &  &  & & $+$ & &\\
\hline
Own-peak-onliness &  & $+$ & $+$ & $+$ & $+$ & & $+$ &$+$\\
\hline
Equal division lower bound & $+$ &  &  &  & & & &  \\
\hline
Equal division guarantee & & $+$ &  &  & $+$ & $+$ & &  \\
\hline 
Non-bossiness & &  &  & $+$ & & $+$ & & \\
\hline
Non-dictatorship & &  & $+$ &   & & & & \\
\hline

Endowments guarantee & & &  &  & & & $+$ &\\
\hline 
Peak order preservation &  & & & & & & & $+$\\
\hline 

\end{tabular}
\end{threeparttable}
\caption{\emph{(Im)possibilities for non-obviously manipulable rules.}}\label{tabla caracterizaciones}
\end{table}

\section{Final remarks}\label{final remarks}

Our study shows how extending the analysis of \emph{non-obvious manipulability} to a general preference domain clarifies the trade-offs between efficiency, fairness, and behavioral simplicity. Table~\ref{tabla caracterizaciones} summarizes the main impossibility and possibility results, which together outline the conceptual boundaries of \emph{non-obviously manipulable} allotment rules.

%Brief summary here about NOM in Sprumont model. Table bla bla

Within the domain of single-peaked preferences \cite{arribillaga2025not} characterize an extensive class of rules satisfying non-obvious manipulability, efficiency, own-peak-onliness, and the equal-division guarantee. Such rules are called \textit{simple}. In economies with excess demand, simple rules fully satiate agents whose peak amount is less than or equal to equal division and assign, to each remaining agent, an amount between equal division and their peak. In economies with excess supply, simple rules are defined symmetrically.
If we compare Theorem \ref{characterization} with the characterization in \cite{arribillaga2025not} (Theorem 2), we can see that we have generalized the domain while weakening efficiency to unanimity. A natural question that arises, then, is whether there exists an agreeable rule that is, at the same time, simple and can appear in both characterizations; such rules would be particularly noteworthy. The answer is negative, since we can construct a profile with excess demand in which an agent who requests less than the equal division does not belong to any agreeable coalition. In such a profile, this agent must receive the equal division under any agreeable rule and their peak allocation under any simple rule.

%Future research. 

\bibliographystyle{ecta}
\bibliography{biblio}

\end{document}